\algnewcommand{\LineComment}[1]{\State \(\triangleright\) #1}
\def\Fq{\mathbb{F}_q}
\def\Fqn{\mathbb{F}_{q^n}}
\newcommand{\D}[1][D]{\ensuremath{\mathcal{#1}}}
\begin{document}
\title{Introducing locality in some generalized AG codes}

%
% If the paper title is too long for the running head, you can set
% an abbreviated paper title here
%
\author{
Bastien Pacifico \inst{1}%\orcidID{2222--3333-4444-5555}}
}
%\authorrunning{Ballet et al.}
% First names are abbreviated in the running head.
% If there are more than two authors, 'et al.' is used.
%
\institute{LIRMM, Université de Montpellier\\Montpellier, France \\
\email{bastien.pacifico@gmail.com}
}
\maketitle              % typeset the header of the contribution

\begin{abstract}
In 1999, Xing, Niederreiter and Lam introduced a generalization of AG codes using the evaluation at non-rational places of a function field. In this paper, we show that one can obtain a locality parameter $r$ in such codes by using only non-rational places of degrees at most $r$. This is, up to the author's knowledge, a new way to construct locally recoverable codes (LRCs). We give an example of such a code reaching the Singleton-like bound for LRCs, and show the parameters obtained for some longer codes over $\mathbb F_3$. 
We then investigate similarities with certain concatenated codes. Contrary to previous methods, our construction allows one to obtain directly codes whose dimension is not a multiple of the locality. Finally, we give an asymptotic study using the Garcia-Stichtenoth tower of function fields, for both our construction and a construction of concatenated codes. We give explicit infinite families of LRCs with locality 2 over any finite field of cardinality greater than 3 following our new approach.
\end{abstract}

\section{Introduction}

\subsection{Known results}

Locally Recoverable Codes (LRCs) are a popular topic lately, in particular for their potential applications in distributed storage \cite{local}.  The locality consists in the possibility of recovering one corrupted symbol using a small amount of other symbols. 
More precisely, a code is said to have locality $r$ if any symbol of a codeword can be obtained using at most $r$ other symbols \cite{pyram,local}.
It was proven in \cite{local} that a $[n,k,d]$ linear code with locality $r$ verifies 
\begin{equation}\label{SingletonLRC}
    d\leq n-k-\left\lceil\frac{k}{r}\right\rceil+2.
\end{equation}
A LRC is said to be optimal when the equality is reached in this bound. 
There exist several constructions of optimal 
LRCs. The first ones were given in \cite{pyram,rank,tapadi}, but they required to use an alphabet of exponential size compared to the code length.
The construction of Tamo and Barg \cite{taba} provides optimal codes of length upper bounded by the size of the alphabet and moreover with constraint on the locality due to the existence of good polynomials.
There also exist codes that reach the bound (\ref{SingletonLRC}) of length greater than size of the alphabet. Such construction can be obtained using for instance algebraic surfaces \cite{bama,salgadovoloch}. In fact, it has been proven in \cite{guxiyu} that the length of an optimal LRC is at most $\mathcal{O}(dq^3)$, where $q$ is the size of the alphabet and for minimum distance $d\geq 5$.
On the other side, a classical approach in coding theory is to study the best that can be obtained for a fixed alphabet size. Fundamental works on this topic were done in \cite{cama,tabafr}. They gave tight bounds and achievability results, such as a Gilbert-Varshamov bound for LRCs. Several constructions of families of codes were then given, for instance or for instance using concatenated codes \cite{cama}, or in \cite{tabavl,maxi,luca,tanal,xingalllast}.  Other considerations on the topic are the correction of multiple erasures and the correction from multiple recovery sets. Details can be found for instance in \cite{taba,tabavl}.

In this paper, we consider the generalized AG-codes introduced in \cite{xinggene} by Xing, Niederreiter and Lam. The well known AG-codes defined by Goppa in \cite{goppa}
are given by the evaluation at rational places (i.e. of degree 1) of functions of an algebraic function field defined over $\Fq$. The generalization of Xing et al. consists in using not only the evaluation at rational places, but at places of higher degrees. In fact, the evaluation at a place of degree $d$ is an element in the residue class field, that is isomorphic to $\mathbb{F}_{q^d}$. It follows that a codeword composed by some evaluations at places of different degree would be polyalphabetic. To address this difficulty, the solution proposed in \cite{xinggene} is to encode the evaluation at non rational places with a $\Fq-$linear code. 

\subsection{Contributions and organization}

The observation behind this document is as follows: if we apply the construction of \cite{xinggene} using non-rational places of degrees at most $r>1$, we can obtain linear codes with locality $r$. It turns out that some of these codes have good or optimal parameters with respect to the Singleton bound for LRCs (\ref{SingletonLRC}). There are similarities between our codes and some concatenated codes, especially to those introduced by Cadambe and Mazumbar in \cite[Section VI. A.]{cama}. 
In order to make a comparison and investigate their differences, we give 
a construction of LRCs obtained by concatenation using an AG-code as outer code. More precisely, a construction from \cite{cama} uses a RS code  as the outer code. We consider a similar construction by using an AG outer code and show its parameters are similar to those of our new construction. Using the recursively defined tower of function fields of Garcia and Stichtenoth \cite{gast}, we give an asymptotic study of both our new construction using generalized AG-codes and the construction of concatenated codes. An important difference is that our new approach from generalized AG-codes allows to construct directly codes whose dimension is not a multiple of the locality, contrary the the one using concatenated codes or the best-known constructions (e.g. \cite{tabavl}).

The paper is organized as follows. In Section 2, we recall the basics of LRCs and concatenated codes. In Section 3, we give the definitions and results of function field theory that we shall use, and present the generalization of AG codes of \cite{xinggene}. In Section 4, we explain how one can obtain locality in these codes, and give an optimal example. In Section 4, we give two explicit families of LRCs, one using concatenated codes and the second with our new approach. We give an asymptotic study. In particular, we show the existence of an infinite family of LRC with locality 2 over finite fields of cardinality greater than 3 and give their parameters.
%The author is not an expert of the topic and is not sure whether this construction is interesting or trivial, and any remark is welcome. 

\section{Locally Recoverable Codes (LRCs)}

\subsection{Generalities}

In what follows, we denote by $[n,k,d]$ a linear code over $\Fq$ with length $n$, dimension $k$ and minimum distance $d$. Such a code is MDS if its parameters reach the equality in the Singleton bound $d\leq n-k+1$. The codes of Reed-Solomon (RS) are a well known example of MDS codes. More precisely, a Reed-Solomon code $\mathrm{RS}(n,k)$ of length $n$ and dimension $k$ is defined by the image of an application
$$\mathrm{RS}_{n,k}:\begin{array}{lcl}
  \Fq[x]_{<k}  & \longrightarrow & \Fq^n \\
   f & \longmapsto & \left(f(\alpha_1),\ldots,f(\alpha_n)\right),
\end{array}$$
where $\alpha_1,\ldots,\alpha_n$ are distinct elements of $\Fq$.
Throughout this paper, we focus on the notion of locally recoverable codes (LRCs).

\begin{definition}
Let $\mathcal{C}\subset \Fq^n$ be a $\Fq-$linear code. The code $\mathcal{C}$ is locally recoverable with locality $r$ if every symbol of a codeword $c=(c_1,\ldots,c_n)\in\mathcal{C}$ can be recovered using a subset of at most $r$ other symbols. The smallest such $r$ is called the locality of the code. 
\end{definition}
This means that for each $1\leq i \leq n$, there exists a recovery set $N_i\subset\{1,\ldots,n\}\smallsetminus\{i\}$ of cardinality at most $r$ such that $c_i$ can be obtained from $\{c_j\}_{j\in N_i}$. In particular, we seek for codes with a small locality $r$.
%A LRC is said to have availability $t$ if each coordinate has $t$ disjoint recovery sets. 
There exists a Singleton-like bound for LRCs \cite{local} and an upper bound for the rate of codes with locality $r$, given in \cite[Theorem 2.1]{taba}, that we recall.
\begin{theorem}
Let $\mathcal{C}$ be a $q-$ary linear code with parameters $[n,k,d]$ with locality $r$. 
The rate of $\mathcal{C}$ verifies 
$$\frac{k}{n}\leq \frac{r}{r+1}.$$
The minimum distance $d$ of $\mathcal{C}$ verifies
$$d\leq n-k-\left\lceil\frac{k}{r}\right\rceil+2.$$
\end{theorem}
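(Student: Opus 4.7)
The plan is to prove the two bounds by separate greedy constructions driven by the locality structure; I tackle the distance bound first, as it is the more substantive statement.

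For the distance bound, the goal is to exhibit a subset $I \subseteq \{1,\ldots,n\}$ with $\mathrm{rank}(\mathcal{C}|_I) = k-1$ and $|I| \geq k + \lceil k/r \rceil - 2$, since the non-trivial kernel of the restriction $\mathcal{C} \to \Fq^I$ then yields a codeword supported on the complement, of weight at most $n - k - \lceil k/r \rceil + 2$. I construct $I$ iteratively: start from $I_0 = \emptyset$, and at each step pick a coordinate $j \notin I_{i-1}$ whose value $c_j$ is not a linear function of $c|_{I_{i-1}}$ (such a $j$ exists whenever the current rank is below $k$), fetch a recovery set $R_j$ of size at most $r$, and put $I_i = I_{i-1} \cup \{j\} \cup R_j$. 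Writing $a_i = |I_i| - |I_{i-1}|$ and $b_i$ for the corresponding rank increment, one has $a_i \leq r+1$, $b_i \leq r$ (because $c_j$ is determined by $c|_{R_j}$, so the rank gain is bounded by $|R_j \setminus I_{i-1}|$), and the crucial inequality $a_i \geq b_i + 1$. I stop as soon as the rank reaches $k-1$; should the last step overshoot, I replace it by a partial addition from $\{j\} \cup R_j$ landing the rank exactly at $k-1$. Summing $a_i - b_i \geq 1$ over the unmodified steps, together with the observation that an overshoot forces the rank before the last step to be at least $k - r$ and hence the total number of steps $\ell$ to satisfy $\ell \geq \lceil k/r \rceil$, a short case analysis on $k \bmod r$ gives $|I| \geq k + \lceil k/r \rceil - 2$ uniformly.

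For the rate bound, I argue through the dual code. The locality condition yields, for each $j$, a codeword $h_j \in \mathcal{C}^\perp$ of weight at most $r+1$, supported on $\{j\} \cup R_j$ and non-zero at $j$ (it encodes the linear relation expressing $c_j$ from $c|_{R_j}$). Greedily build a linearly independent family $S \subseteq \mathcal{C}^\perp$: starting from $S = \emptyset$, while $\bigcup_{h \in S} \mathrm{supp}(h) \neq \{1,\ldots,n\}$, pick $j$ outside this union and append $h_j$. Independence is preserved because $h_j$ is non-zero at $j$ whereas every previous $h \in S$ vanishes there. At termination the $|S|$ supports, each of size at most $r+1$, cover $\{1,\ldots,n\}$, so $\dim \mathcal{C}^\perp \geq |S| \geq \lceil n/(r+1) \rceil$, which rearranges to $k/n \leq r/(r+1)$.

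The delicate point is the modified last step in the distance argument: one has to verify that the partial addition can always be arranged so that $\mathrm{rank}(\mathcal{C}|_I) = k-1$ exactly, and that the loss of the $a_i - b_i \geq 1$ invariant at that one step is exactly compensated by the extra iteration an overshoot forces (captured by the sharper $\ell \geq \lceil k/r \rceil$ in that scenario). Reconciling this with the case where the rank reaches $k-1$ naturally, via a residue analysis of $k$ modulo $r$, is what makes the bound $|I| \geq k + \lceil k/r \rceil - 2$ hold uniformly and so gives the stated Singleton-like inequality.
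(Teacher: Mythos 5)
The paper does not prove this theorem; it recalls it from \cite[Theorem 2.1]{taba}, with the distance bound tracing back to \cite{local}. Your blind proof is correct and reproduces the standard arguments from that literature: the greedy construction of a rank-$(k-1)$ information-deficient set for the Singleton-like distance bound, and a covering by low-weight dual codewords for the rate bound. The overshoot bookkeeping you flag does close: if the final step $\ell$ is partial you lose $a_\ell - b_\ell \geq 1$ for that step but gain $\rho_{\ell-1} \geq k - r$ (where $\rho_{\ell-1}$ is the rank before the last step), so $(\ell-1)r \geq \rho_{\ell-1} \geq k - r$ gives $\ell \geq \lceil k/r\rceil$ and $|I| \geq (k-1)+(\ell - 1) \geq k + \lceil k/r\rceil - 2$; in the no-overshoot case $|I| \geq (k-1) + \lceil (k-1)/r\rceil \geq k - 2 + \lceil k/r\rceil$ since $\lceil k/r\rceil \leq \lceil (k-1)/r\rceil + 1$. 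The rate argument is likewise sound once one notes that for a linear code the recovery map may be taken $\Fq$-linear, which is what produces the dual vector $h_j$ supported on $\{j\}\cup R_j$ with $h_j(j)\neq 0$.
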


\begin{example}
The code $\mathrm{RS}(n,k)$ is a trivial example of LRC with locality $k$, since every symbol can be reconstructed by using $k$ others. It reaches the Singleton-like bound since $n-k-\lceil\frac{k}{k}\rceil+2=n-k+1$. However, it is a bad example of LRC since the locality is equal to the dimension. 
\end{example}

However, the previous results do not take into account the size of the alphabet, i.e. the cardinality of the base field. There are several bounds considering this constraint. In \cite{cama} and \cite[Theorem 5.1]{tabafr}, the authors gave a Gilbert-Varshamov-type bound for LRC (\ref{GV}), where $R_q(r,\delta)$ denotes the
asymptotic bound on the rate of $q-$ary locally repairable codes with locality $r$ and relative minimum distance $\delta$.

\begin{equation}\label{GV}
\begin{aligned}
    R_q(r,\delta)\geq &1-\min_{0\leq s \leq 1}\left[\frac{1}{r+1}\log_q\left((1+(q-1)s)^{r+1}\right.\right.
    \\&\left.\left.+(q-1)(1-s)^{r+1}\right)-\delta\log_q s\right]. 
\end{aligned}
\end{equation}

The construction of Tamo-Barg-Vladuts \cite{tabavl} is known to exceed this bound. This result is also obtained by \cite{maxi,xingalllast}.
Some achievability results considering a fixed alphabet size have been obtained via concatenated codes \cite{cama,xingalllast}. 

\subsection{Concatenated codes}

Concatenated codes were introduced by Forney \cite{forney} in 1965. This name comes from the idea of successively applying two encoders. It consists in first using an outer code over a large alphabet, then use an inner code to encode the codeword symbols of the first code. In our framework, a concatenated code can be defined as follow. 

\begin{definition}
    Let $\mathcal C_\mathrm{out}$ be a $q^{k'}-ary$ linear code of parameters $[n,k,d]$ and $\mathcal C_\mathrm{in}$ be a $q-ary$ linear code of parameters $[n',k',d']$ such that $$
    \mathcal C_\mathrm{out}(m)=(c_1,,\ldots,c_n),$$ where $m\in\mathbb F_{q^{k'}}^k$ and $c_1,\ldots,c_n\in\mathbb F_{q^{k'}}$.
    Then the concatenated code $\mathcal C_\mathrm{conc}$ of $\mathcal{C}_\mathrm{in}$ and $\mathcal{C}_\mathrm{out}$ is defined by 
    $$\mathcal{C}_\mathrm{conc}(m)=\left(\mathcal{C}_\mathrm{in}(c_1)~\mid~\cdots~\mid~\mathcal{C}_\mathrm{in}(c_n)\right).$$
\end{definition}

Note that the locality of a concatenated code is given by the one of the inner code \cite[Theorem 4.1]{xingalllast}.
Recall also that such code verifies the following properties.

\begin{proposition}\label{concprop}
The code $\mathcal{C}_\mathrm{conc}$ is a $[nn',kk',\geq dd']$ linear code over $\mathbb F_q$.
\end{proposition}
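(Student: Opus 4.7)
The plan is to verify the three parameters of $\mathcal C_\mathrm{conc}$ one by one, handling length, linearity with dimension, and then the minimum distance bound. Throughout, I fix an $\mathbb F_q$-vector space isomorphism $\mathbb F_{q^{k'}} \cong \mathbb F_q^{k'}$ so that the inner encoder, defined on $\mathbb F_q^{k'}$, can be applied to the symbols $c_i \in \mathbb F_{q^{k'}}$ of the outer codeword.

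First I would dispatch the length: each of the $n$ symbols produced by $\mathcal C_\mathrm{out}(m)$ is replaced by its inner encoding in $\mathbb F_q^{n'}$, so the concatenation lives in $\mathbb F_q^{nn'}$. Next, for linearity and dimension, the outer map $\mathcal C_\mathrm{out}: \mathbb F_{q^{k'}}^k \to \mathbb F_{q^{k'}}^n$ is $\mathbb F_{q^{k'}}$-linear, hence in particular $\mathbb F_q$-linear once the source and target are viewed as $\mathbb F_q$-vector spaces of dimensions $kk'$ and $nk'$; the inner map is $\mathbb F_q$-linear by hypothesis, and it acts coordinatewise on $(c_1,\ldots,c_n)$. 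The composition is therefore $\mathbb F_q$-linear. For injectivity, an element $m$ in the kernel of $\mathcal C_\mathrm{conc}$ gives $\mathcal C_\mathrm{in}(c_i) = 0$ for all $i$, which by injectivity of $\mathcal C_\mathrm{in}$ (it has dimension $k'$) forces each $c_i = 0$, and then injectivity of $\mathcal C_\mathrm{out}$ forces $m = 0$. Hence $\dim_{\mathbb F_q} \mathcal C_\mathrm{conc} = kk'$.

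For the minimum distance, I would pick a nonzero message $m$ and count nonzero blocks. Since $\mathcal C_\mathrm{out}$ has minimum distance $d$, the tuple $(c_1,\ldots,c_n)$ has at least $d$ nonzero coordinates. For each such index $i$, the block $\mathcal C_\mathrm{in}(c_i)$ is a nonzero codeword of $\mathcal C_\mathrm{in}$ (again by injectivity of the inner encoder on nonzero inputs), so it contributes weight at least $d'$. Summing over the at least $d$ nonzero blocks gives total Hamming weight at least $dd'$, and since $\mathcal C_\mathrm{conc}$ is linear this yields $d(\mathcal C_\mathrm{conc}) \geq dd'$.

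The main subtlety, rather than a real obstacle, is the identification between $\mathbb F_{q^{k'}}$ and $\mathbb F_q^{k'}$ and keeping track of which scalars act linearly at each step: the outer code is only guaranteed to be $\mathbb F_q$-linear when restricted through this identification, and $\mathcal C_\mathrm{in}$ need not respect the multiplicative structure of $\mathbb F_{q^{k'}}$. Once this is acknowledged, the three properties follow directly; the distance bound $\geq dd'$ is tight in general because there is no reason for inner blocks to attain distance more than $d'$ simultaneously with the outer support having weight exactly $d$, which explains the inequality sign in the statement.
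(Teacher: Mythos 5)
Your proof is correct and complete. The paper states Proposition~\ref{concprop} as a recalled, well-known fact about concatenated codes and gives no proof of its own, so there is nothing in the paper to diverge from; your argument is the standard one. You rightly flag the one real subtlety — the identification $\mathbb F_{q^{k'}} \cong \mathbb F_q^{k'}$ as $\mathbb F_q$-vector spaces (the map $\pi$ in the paper's setup) — which is exactly what makes the composed map $\mathbb F_q$-linear and makes the $\mathbb F_q$-dimension of the message space equal $kk'$; the injectivity chain (inner encoder injective because it is a linear map with $\dim_{\mathbb F_q}\mathcal C_\mathrm{in}=k'$, outer encoder injective by definition of an $[n,k,d]$ code) and the block-by-block weight count giving $\geq dd'$ are both correct.
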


In \cite[Theorem 2]{cama}, the authors used concatenated codes to obtain asymptotic achievability results on binary LRCs. More precisely, they used an outer random $q^r-ary$ linear code and the $q-ary$ single parity check code of length $r+1$ as the inner code.
In \cite{xingalllast}, the authors also used concatenated codes to obtain some dimension-optimal locally repairable codes. Moreover, they also used some shortening techniques to obtain dimension-optimal LRCs whose dimension is not a multiple of the locality. 

\section{Generalized AG-Codes}

\subsection{Algebraic function fields}
Let $\Fq$ be the field with $q$ elements, and let $F/\Fq$ be an algebraic function field of genus $g=g(F)$ over $\Fq$. For $\mathcal{O}$ a valuation ring, a place $P$ is defined to be $\mathcal{O}\smallsetminus\mathcal{O}^\times$. The evaluation of a function at $P$ is an element of the residue class field $F_P$, that is isomorphic to $\mathbb F_{q^d}$, $d$ being the degree of the place. A rational place is a place of degree 1. A divisor $\D$ is defined as a formal sum of places, and we denote by $Supp(\D)$ the support of $\D$ and $\mathcal{L}(\D)$ the corresponding Riemann-Roch space. Details about algebraic function fields can be found in 
\cite{stic2}.

In the following, obtaining infinite families of codes with our construction relies on the existence of families of function fields with a large number of places of given degree.
In this context, we consider infinite sequences of algebraic function fields.

\begin{definition}
    An infinite sequence of function field over $\Fq$ is a sequence of $\mathcal{F}=(F_1,\ldots,F_\ell,\ldots)$ of function fields $F_\ell/\Fq$ such that for all $\ell$, we have $g(F_{\ell+1})>g(F_\ell)$.
\end{definition}

\noindent Moreover, let us introduce the Drinfeld-Vladut Bound at order $r$.

\begin{definition}[Drinfeld-Vladut Bound of order $r$]
Let $F/\Fq$ be a function field over $\Fq$ and $B_r(F/\Fq)$ denotes its number of places of degree $r$. Let
$$B_r(q,g)=\max\{B_r(F/\Fq~\mid~F/\Fq)\text{ is a function field over }\Fq\text{ of genus g}\}.$$
Then,
$$\limsup_{g\longrightarrow+\infty}\frac{B_r(q,g)}{g}\leq\frac{1}{r}(q^\frac{r}{2}-1).$$
\end{definition}
For $r=1$, this gives the usual Drinfeld-Vladut Bound on the number of rational places. There exist several families of function field reaching this bound, such as the Garcia-Stichtenoth tower of function fields \cite{gast}. Such towers are recalled and used in Section \ref{asympsection}.

\subsection{Generalized AG codes}
Let $F/\Fq$ be an algebraic function field defined over $\Fq$ of genus $g$.
An AG code is defined by the evaluation at rational places of functions from a Riemann-Roch space. More precisely, considering a function field $F/\Fq$, let $G$ be a divisor and $\D=P_1+\cdots+P_n$, where the $P_i$ are rational places. The AG code $C(\D,G)$ is defined by the map

$$C(\D,G):\begin{array}{lcl}
  \mathcal{L}(G)  & \longrightarrow & \Fq^n \\
   f & \longmapsto & \left(f(P_1),\ldots,f(P_n)\right).
\end{array}$$

In \cite{xinggene}, the authors extended this construction to the use of non-rational places. Following their work, we use the notations:
\begin{itemize}
    \item $P_1,\ldots,P_s$ are $s$ distinct places of $F$,
    \item $G$ is a divisor of $F$ such that $Supp(G)\bigcap\{P_1,\ldots,P_s\}=\emptyset$,
\end{itemize}
and for $1\leq i \leq s$ :
\begin{itemize}
    \item $k_i = \deg (P_i)$ is the degree of $P_i,$
    \item $C_i$ is a $[n_i,k_i,d_i]_q$ linear code,
    \item $\pi_i$ is a fixed $\Fq-$linear isomorphism mapping $\mathbb F_{q^{k_i}}$ to $C_i$.
\end{itemize}

Consider the application 
$$
  \alpha~:\begin{array}{lcl}
  \mathcal L(G)  & \longrightarrow & \Fq^n \\
   f & \longmapsto & \left(\pi_1(f(P_1)),\ldots,\pi_s(f(P_s))\right).
\end{array},
$$
where $n=\sum_{i=1}^sn_i$.

\begin{definition}\label{AGgen}
    The image of $\alpha$ is called a generalized algebraic-geometric code, denoted by $C(P_1,\ldots,P_s:G:C_1,\ldots,C_s)$.
\end{definition}

Such a code is well-defined if the application $\alpha$ is injective, that is the case if $\deg(G)<\sum_{i=1}^sk_i$ \cite[Lemma 3.1]{xinggene}. Furthermore, the authors give a lower bound on the minimum distance and the dimension of these codes in the following theorem \cite[Theorem 3.2]{xinggene}. 

\begin{theorem}\label{thmgag}
Under the same notations, if $\deg(G)<\sum_{i=1}^s k_i$ then, the dimension $k$ and the minimum distance $d$ of the code defined by $\alpha$ verify
\begin{itemize}
    \item $k\geq \deg(G)-g+1$, with equality if $\deg(G)\geq 2g-1$,
    \item $d\geq \sum_{i=1}^s d_i -\deg(G) - \max_R\left\{\sum_{i\in R}(d_i-k_i)\right\}$,    
\end{itemize}
where the maximum is extended over all subsets $R$ of ${1,\ldots,s}$ and an empty sum is defined to be 0. 
\end{theorem}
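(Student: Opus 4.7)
The plan is to treat the dimension and distance bounds separately. The dimension bound is essentially Riemann--Roch once we know $\alpha$ is injective, while the distance bound is a weight-by-weight analysis that uses the zero divisor of $f\in\mathcal{L}(G)$ to control how many coordinate blocks of $\alpha(f)$ can vanish.

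For the dimension, Lemma 3.1 of \cite{xinggene}, invoked just above Definition \ref{AGgen}, ensures that $\alpha$ is injective on $\mathcal{L}(G)$ under the hypothesis $\deg(G)<\sum_{i=1}^s k_i$. Hence $k=\dim_{\mathbb F_q}\mathcal{L}(G)$, and the Riemann--Roch theorem gives $\dim_{\mathbb F_q}\mathcal{L}(G)\geq \deg(G)-g+1$, with equality as soon as $\deg(G)\geq 2g-1$ (non-special range).

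For the distance, I would pick an arbitrary nonzero $f\in\mathcal{L}(G)$ and analyse $c=\alpha(f)$ block by block. Let $R=\{i:f(P_i)=0\}$. For $i\notin R$, the element $\pi_i(f(P_i))$ is a nonzero codeword of $C_i$ (since $\pi_i$ is an $\mathbb F_q$-linear isomorphism), so its Hamming weight is at least $d_i$. This gives
$$\mathrm{wt}(c)\geq \sum_{i\notin R}d_i=\sum_{i=1}^s d_i-\sum_{i\in R}d_i.$$

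The main step is then to control $\sum_{i\in R}k_i$. Because $Supp(G)$ is disjoint from $\{P_1,\ldots,P_s\}$ and $f\in\mathcal{L}(G)$, the zero divisor of $f$ has degree at most $\deg(G)$; each $P_i$ with $i\in R$ contributes $k_i=\deg(P_i)$ to $(f)_0$, so $\sum_{i\in R}k_i\leq\deg(G)$. Writing $d_i=k_i+(d_i-k_i)$ in the previous estimate turns it into
$$\mathrm{wt}(c)\geq\sum_{i=1}^s d_i-\deg(G)-\sum_{i\in R}(d_i-k_i),$$
and upper-bounding $\sum_{i\in R}(d_i-k_i)$ by $\max_R\sum_{i\in R}(d_i-k_i)$ yields a lower bound uniform in $f$, which is exactly the stated inequality. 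The only delicate point is this final decomposition $d_i=k_i+(d_i-k_i)$: it is what allows the divisor-theoretic inequality on the $k_i$'s to cooperate with the intrinsic distances $d_i$ of the inner codes and produce a single universal $\max_R$-term, rather than a bound that still depends on the particular $f$.
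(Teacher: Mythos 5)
Your proof is correct, and it is essentially the argument from the cited source; note that the paper under review does not prove Theorem~\ref{thmgag} but imports it directly from \cite[Theorem 3.2]{xinggene}, so there is no in-paper proof to compare against. For the dimension bound, invoking the injectivity from \cite[Lemma 3.1]{xinggene} and then Riemann--Roch is exactly the right move. For the distance bound, your block-by-block weight estimate, the divisor-degree inequality $\sum_{i\in R}k_i\leq\deg(G)$ coming from $(f)+G\geq 0$ with $\deg((f)+G)=\deg(G)$ and $P_i\notin\mathrm{Supp}(G)$, and the rewriting $d_i=k_i+(d_i-k_i)$ are the key steps, and passing from the specific $R_f$ to an unconstrained $\max_R$ gives a bound uniform in $f$ (the loosening from a constrained to an unconstrained maximum only weakens the lower bound, so it remains valid). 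This matches the known proof of Xing--Niederreiter--Lam.
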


\section{Locality in Generalized AG-codes}

The observation that led to the writing of this paper is the following: if $k_1=\ldots=k_s=:k$, the code defined has locality $k$. More formally,

\begin{proposition}
Let $F$ be an algebraic function fields defined over $\Fq$ of genus $g$.
Let $P_1,\ldots,P_s$ be places of $F$ of degrees $k_i=\deg(P_i)$ respectively, and let $G$ be a divisor such that  $\deg(G)<\sum_{i=1}^sk_i$. For $1 \leq i \leq s$, let $C_i$ be a $[n_i,k_i,d_i]$ $\Fq-$linear code with locality at most $k_i$.
Let $\mathcal{C}=C(P_1,\ldots,P_s:G:C_1,\ldots,C_s)$ be a generalized AG-code as in Definition \ref{AGgen}. If there exists $r\in\mathbb N$ such that for all $1\leq i \leq s$, we have $1<k_i\leq r$ and $n_i>\deg(P_i)$, then $\mathcal{C}$ has locality $r$.
\end{proposition}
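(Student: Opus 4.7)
The plan is to reduce the question to the locality of the individual inner codes $C_i$, exploiting the block structure of $\mathcal{C}$. By construction of $\alpha$, every codeword of $\mathcal{C}$ splits as a juxtaposition $(c^{(1)} \mid \cdots \mid c^{(s)})$ with $c^{(i)} := \pi_i(f(P_i))$ for some $f \in \mathcal{L}(G)$; since $\pi_i$ is an $\Fq$-linear isomorphism from $\mathbb{F}_{q^{k_i}}$ onto $C_i$, each block $c^{(i)}$ is by definition a codeword of the inner code $C_i$. I would begin the proof by recording this block decomposition, which is the only structural fact from the generalized AG construction that the argument actually uses.

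Next, I fix an arbitrary coordinate of a codeword $c$ of $\mathcal{C}$, and write it as position $(i,j)$ with $1 \le j \le n_i$ inside block $i$. The assumption that $C_i$ has locality at most $k_i$ furnishes, for every codeword of $C_i$, a recovery set $N_{i,j} \subseteq \{1,\ldots,n_i\}\setminus\{j\}$ of cardinality at most $k_i$ from which the $j$-th symbol can be reconstructed. Applying this to $c^{(i)} \in C_i$ and pulling back to the global indexing of $\mathcal{C}$, one obtains a recovery set for the position $(i,j)$ of $c$ that lies entirely within block $i$ and has cardinality at most $k_i \leq r$. Since $(i,j)$ was arbitrary, $\mathcal{C}$ has locality at most $r$.

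The two numerical hypotheses each play a transparent role: $n_i > \deg(P_i) = k_i$ guarantees that $C_i$ has strictly positive redundancy, without which no non-trivial local recovery could exist, while $1 < k_i \leq r$ uniformly bounds the size of the recovery sets. The main ``obstacle'' here is really only bookkeeping: one must check that local recovery inside $C_i$ lifts unchanged to $\mathcal{C}$, which is immediate from the fact that each block is itself a codeword of the corresponding inner code and that the inner codes act independently under the encoding map $\alpha$. In this sense the argument mirrors the remark recalled in Section~2 that the locality of a concatenated code is that of its inner code; the novelty is that here the ``blocks'' arise not from an a priori concatenation but from evaluations at places of possibly different degrees.
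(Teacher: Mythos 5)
Your proof is correct and follows essentially the same route as the paper's: decompose the codeword into the blocks $\pi_i(f(P_i))$, observe that each block is a codeword of the inner code $C_i$, and invoke the locality of $C_i$ to recover any symbol from at most $k_i \leq r$ others within that same block. You simply spell out more carefully the block decomposition and the role of the numerical hypotheses, which the paper's terse proof leaves implicit.
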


\begin{proof}
Let $c=(c_1,\ldots,c_n)\in\mathcal{C}$ be a codeword. For each $1\leq j \leq n$, there exists $1\leq i \leq s$ such that the symbol $c_j$ is a symbol of the linear code $C_i$. Since $C_i$ is a linear code with locality at most $k_i$, one can recover $c_j$ using $k_i\leq r$ other symbols. 

\end{proof}

In order to obtain codes with a given locality $r$, it makes sense to use places $P_1,\ldots,P_s$ of degree $r$, and encode the evaluations at each $P_i$ using the same code $\mathcal C'$. 
We obtain the following. 

\begin{proposition}\label{proppratik}
Let $\mathcal{C}=C(P_1,\ldots,P_s:G:C_1,\ldots,C_s)$ be a generalized AG-code as defined above. Suppose that $\deg P_1=\cdots=\deg P_s=r$ and $\mathcal{C'}=C_1=\cdots=C_s$ is a $[n',r,d']$ linear code with locality $r$. If $2g-1\leq \deg(G) < rs$, then $\mathcal{C}$ is a $[sn',deg(G)-g+1,\geq d'\left(s-\left\lfloor\frac{\deg G}{r}\right\rfloor\right)]$ linear code over $\Fq$ with locality $r$. 
\end{proposition}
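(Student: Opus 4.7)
The plan is to verify the three parameters (length, dimension, minimum distance) and the locality separately. The length is immediate: $n = \sum_{i=1}^s n_i = sn'$, since every inner code is the same $\mathcal{C}'$ of length $n'$. For the dimension, the hypothesis $\deg(G) \geq 2g-1$ together with $\deg(G) < rs = \sum_{i=1}^s k_i$ puts us exactly in the equality case of Theorem \ref{thmgag}, giving $k = \deg(G) - g + 1$. For the locality, I would invoke the preceding proposition: each $k_i$ equals $r$, so $1 < k_i \leq r$, and the condition $n_i > \deg(P_i) = r$ holds because $\mathcal{C}'$ has locality $r$ (hence length strictly greater than $r$). Thus $\mathcal{C}$ has locality $r$.

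The real content, and the step I expect to be the main obstacle, is the minimum distance bound. The generic formula from Theorem \ref{thmgag} specializes here to $d \geq sd' - \deg(G) - \max_R\{\sum_{i\in R}(d'-r)\}$, which does not directly yield $d'(s - \lfloor \deg(G)/r\rfloor)$. I would instead redo the argument from scratch, exploiting the fact that all inner codes coincide with the same $\mathcal{C}'$ of minimum distance $d'$. Take a nonzero $f \in \mathcal{L}(G)$ and set $S = \{i : f(P_i) \neq 0\}$. For each $i \in S$, the block $\pi_i(f(P_i))$ is a nonzero codeword of $\mathcal{C}'$ and hence contributes at least $d'$ nonzero coordinates to $\alpha(f)$, so $\mathrm{wt}(\alpha(f)) \geq |S|\, d'$.

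It remains to bound $|S|$ from below. For $i \notin S$ we have $f(P_i) = 0$, which means $P_i$ appears in the zero divisor $(f)_0$. Since the $P_i$ have degree $r$ and are pairwise distinct, this yields $\deg(f)_0 \geq r\,|S^c|$, where $S^c = \{1,\ldots,s\}\setminus S$. On the other hand, $f \in \mathcal{L}(G)$ forces $(f)_\infty \leq G$ (using that $G$ may be taken effective thanks to $\deg(G)\geq 2g-1$, or by replacing $G$ by its positive part in the argument), so $\deg(f)_\infty \leq \deg(G)$; combined with $\deg(f)_0 = \deg(f)_\infty$, we obtain $r\,|S^c| \leq \deg(G)$, i.e.\ $|S^c| \leq \lfloor \deg(G)/r\rfloor$. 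Hence $|S| \geq s - \lfloor \deg(G)/r\rfloor$, and plugging this into the weight estimate gives $\mathrm{wt}(\alpha(f)) \geq d'\bigl(s - \lfloor \deg(G)/r\rfloor\bigr)$, as claimed.

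The only subtlety I anticipate is handling the case where $G$ is not effective: the inequality $\deg(f)_\infty \leq \deg(G)$ would have to be replaced by $\deg(f)_\infty \leq \deg(G_+)$, which may weaken the bound. I would either restrict to effective $G$ (which is natural under $\deg(G) \geq 2g-1$ by Riemann--Roch, after possibly adjusting $G$ within its linear equivalence class while keeping the support disjoint from $\{P_1,\ldots,P_s\}$), or state the bound with $\deg(G_+)$ when this adjustment is unavailable. Everything else is a direct bookkeeping application of results already recalled in the paper.
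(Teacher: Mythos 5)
Your proof is correct and follows essentially the same approach as the paper: length, dimension, and locality are handled identically, and the distance bound rests on the same counting of places where $f$ vanishes multiplied by the inner minimum distance $d'$. The one wrinkle you flagged --- passing from $f \in \mathcal{L}(G)$ to $\deg (f)_\infty \leq \deg(G)$, which indeed requires $G$ effective --- is most cleanly resolved by avoiding $(f)_0$ and $(f)_\infty$ altogether. Since $\{P_1,\ldots,P_s\}$ is disjoint from $\mathrm{Supp}(G)$ by the definition of the generalized AG code, if a nonzero $f \in \mathcal{L}(G)$ vanishes at every $P_i$ with $i \in S^c$, then $f \in \mathcal{L}\!\left(G - \sum_{i\in S^c} P_i\right)$; a Riemann--Roch space containing a nonzero function must be attached to a divisor of nonnegative degree, whence $r\,|S^c| \leq \deg(G)$ and so $|S^c| \leq \lfloor \deg(G)/r\rfloor$ for arbitrary $G$. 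This removes the effectivity issue without the replacement-by-a-linearly-equivalent-effective-divisor detour you contemplate, and the rest of your argument goes through verbatim.
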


\begin{proof}
The code $\mathcal{C}$ is well defined since $\deg(G)<rs=\sum_{i=1}^s\deg P_i$ by Theorem \ref{thmgag}, that also gives the dimension of the code. The length is straightforward. It remains to compute the minimum distance. A function in $f\in\mathcal{L}(G)$ can vanish at most at $\left\lfloor\frac{\deg(G)}{r}\right\rfloor$ places of degree $r$. Moreover, for at least the $s-\left\lfloor\frac{\deg(G)}{r}\right\rfloor$ places where $f$ does not vanish, the projection onto $\mathcal{C}'$ gives at least $d'$ non zero symbols. Consequently, the minimum weight of a codeword in $\mathcal{C}$ is lower bounded by $$d\geq d'\left(s-\left\lfloor\frac{\deg G}{r}\right\rfloor\right).$$
\end{proof}

A specific family of such codes is introduced in Section \ref{asympsection} and its asymptotic properties are studied. For now, let us give several examples of the codes obtained with our new approach. First, we give an example reaching the Singleton bound for LRC. Then, we show the parameters obtained for longer codes. 

\subsection{An optimal example using $\mathbb F_3(x)$}\label{exopt}

Let $F=\mathbb F_3(x)$ be the rational function field over $\mathbb
F_3$. It contains 4 rational places : $P_0$, $P_1$, $P_2$ and $P_\infty$, where $P_i$ can be defined by the polynomial $x-i$ for $0\leq i \leq 2$ and $P_\infty$ is the place at infinity. It also contains three places of degree 2 : $P_1^2$, $P_2^2$ and $P_3^2$, that can be defined by the irreducible polynomials $P_1^2(x)=x^2+2x+2$, $P_2^2(x)=x^2+1$, and $P_3^2(x)=x^2+x+2$ respectively. 
Let $C_1=C_2=C_3=\mathrm{RS}(3,2)=\{(f(0),f(1),f(2))~\mid~f\in\mathbb F_3[x]_{<2}\}$. The code $\mathcal C=C(P_1^2,P_1^2,P_1^2:4P_\infty:\mathrm{RS}(3,2),\mathrm{RS}(3,2),\mathrm{RS}(3,2))$ is a $(9,5)-$ code over $\mathbb F_3$ with locality $2$.

\begin{remark}
This code can be introduced without any geometric notation. In fact, the Riemann-Roch space $\mathcal{L}(4P_\infty)$ is $\mathbb F_3[x]_{<5}$, and for $1\leq i \leq 3$ the evaluations $f(P_i)$ correspond to the projection of $f\in\mathbb F_3[x]_{<5}$ to $\frac{\mathbb F_3[x]}{(P_i^2(x))}\simeq\mathbb F_3[x]_{<2}$.
\end{remark}

Now we explain how to obtain a generating matrix of these code. All the computations were done using Magma \cite{magma}. For clarity, we state this example in terms of polynomials.  

First, we fix the basis of $\mathcal{L}(4P_\infty)$ to be $B=\{1,x,x^2,x^3,x^4\}.$ We compute the evaluation map $\mathcal L(4P_\infty)\longrightarrow F_{P_1^2}\times F_{P_2^2} \times F_{P_3^2}$. In this particular case, it correspond to the remainders of the elements of $B$ modulo the $P_i^2(x)$, for $1\leq i \leq 3$.
We obtain 
$$G_0=\left(
\begin{array}{ccc}
  1 & 1 & 1  \\
  x & x & x  \\
  x+1 & 2 & 2x+1 \\
  2x+1 & 2x & 2x+2 \\
  2 & 1 & 2 
\end{array}
  \right).$$
By fixing the basis of $\mathbb F_3[x]_{<2}$ to be $\{1,x\}$ we can obtain a matrix with coefficients in $\mathbb F_3$:
$$G_1=\left(
\begin{array}{cccccc}
  1 & 0 & 1 & 0 & 1 & 0  \\
  0 & 1 & 0 & 1 & 0 & 1  \\
  1 & 1 & 2 & 0 & 1 & 2 \\
  1 & 2 & 0 & 2 & 2 & 2 \\
  2 & 0 & 1 & 0 & 2 & 0 
\end{array}
  \right).$$
Let $C_i$ denote the $i-$th column of $G_1$. The two first columns $C_1$ and $C_2$ correspond to the evaluation at $P_1$, the columns $C_3$ and $C_4$ to the evaluation at $P_2$ and the last two to the evaluation at $P_3$.
Now consider that the Reed-Solomon code $\mathrm{RS}(3,2)$ defined above is generated, considering the basis $\{1,x\}$ of $\mathbb F_3[x]_{<2}$, by the matrix 
$$G_{\mathrm{RS}}=\left(
\begin{array}{ccc}
  1 & 1 & 1  \\
  0 & 1 & 2  
  \end{array}
  \right).$$

Finally, a generating matrix of $\mathcal C$ is given by 
$$
\left[ [C_1C_2]G_{\mathrm{RS}} \mid [C_3C_4]G_{\mathrm{RS}} \mid [C_5C_6]G_{\mathrm{RS}} \right].
$$
This gives the following matrix over $\mathbb F_3$
$$G=\left(\begin{array}{ccccccccc}
  1 & 1 & 1 & 1 & 1 & 1 & 1 & 1 & 1  \\
  0 & 1 & 2 & 0 & 1 & 2 & 0 & 1 & 2  \\
  1 & 2 & 0 & 2 & 2 & 2 & 1 & 0 & 2 \\
  1 & 0 & 2 & 0 & 2 & 1 & 2 & 1 & 0  \\
  2 & 2 & 2 & 1 & 1 & 1 & 2 & 2 & 2 
\end{array}\right).$$

Moreover, it has locality 2 since if a symbol of a codeword is corrupted, one can reconstruct it using $G_{\mathrm{RS}}$. 

\begin{example}
Let $c=(c_1,\ldots, c_9)\in \mathcal C$, and supppose that the symbol $c_2$ is corrupted, and we want to reconstruct it. We use only the symbols $c_1$ and $c_3$. More precisely, we extract the matrix containing the first and third columns of $G_{\mathrm{RS}}$. It is invertible then one can reconstruct $f_1=f \mod P_1(x)(=f(P_1))$ by applying it to the vector $(c_1,c_3)$. Finally $c_2$ is obtained by evaluating $f_1$ at 1.
\end{example}

According to Proposition \ref{proppratik}, the minimum distance of this code is at least 2. 
Using Magma, we computed that the minimum distance of this code is 3.
Consequently, the code $C(P_1^2,P_1^2,P_1^2:4P_\infty:\mathrm{RS}(3,2),\mathrm{RS}(3,2),\mathrm{RS}(3,2))$ is a [9,5,3] linear code over $\mathbb F_3$ with locality 2, reaching the Singleton-like bound (\ref{SingletonLRC}).

This example generalizes to any prime power $q\geq 3$. and we can similarly define a $[\frac{3}{2}(q^2-q),q^2-q-1,3]_q$ linear code with locality 2, reaching the Singleton bound.  

\begin{proposition}
Let $q\geq3$ be a prime power. Let $\Fq(x)$ be the rational function field over $\Fq$. Let $t={\frac{q^2-q}{2}}$.
Denote by $P_1^2,\ldots,P^2_t$ the $t$ places of degree 2 of $\Fq(x)$. Let $\mathrm{RS}_q(3,2)$ be a RS code of dimension 2 and evaluating at 3 distinct elements of $\Fq$. The code $\mathcal{C}=C(P_1^2,\ldots,P^2_{t}:(q^2-q-2)P_\infty:\mathrm{RS}_q(3,2),\ldots,\mathrm{RS}_q(3,2))$ is a $[\frac{3}{2}(q^2-q),q^2-q-1,3]_q$ linear code with locality 2, reaching the Singleton bound.  
\end{proposition}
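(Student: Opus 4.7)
The plan is to apply Proposition \ref{proppratik} to obtain the length, dimension and locality, use the Singleton-like bound to derive $d \leq 3$, and then establish $d \geq 3$ by a direct structural argument.

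First, I would count degree-$2$ places of $\F_q(x)$: they biject with monic irreducible quadratic polynomials over $\F_q$, whose number is $q^2 - (q^2+q)/2 = (q^2-q)/2 = t$. Setting $g = 0$, $r = 2$, $s = t$, $n' = 3$, $d' = 2$ and $\deg G = q^2-q-2$, the hypotheses of Proposition \ref{proppratik} read $2g - 1 = -1 \leq q^2-q-2$ (true for $q \geq 3$) and $\deg G = q^2-q-2 < q^2-q = rs$. Applying the proposition yields length $\frac{3(q^2-q)}{2}$, dimension $q^2-q-1$, locality $2$, and $d \geq 2$. Since $q(q-1)$ is always even, $\lceil (q^2-q-1)/2 \rceil = (q^2-q)/2$, so the Singleton-like bound for LRCs gives $d \leq n - k - \lceil k/r \rceil + 2 = 3$.

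The main obstacle is to upgrade $d \geq 2$ to $d \geq 3$, since the bound from Proposition \ref{proppratik} is not tight. I would argue by contradiction: assume a codeword of weight exactly $2$ exists. Because $\mathrm{RS}_q(3,2)$ is MDS of minimum distance $2$, every nonzero block contributes weight $\geq 2$, so exactly one block (say the $i$-th) is nonzero and has weight $2$. The corresponding function $f \in \mathcal{L}((q^2-q-2)P_\infty) = \F_q[x]_{\leq q^2-q-2}$ then vanishes at the remaining $t-1$ places of degree $2$, forcing $f = \lambda \prod_{j \neq i} P_j^2(x)$ for some $\lambda \in \F_q^\times$, since the product has degree exactly $2(t-1) = q^2-q-2$ and saturates the degree bound on $f$.

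The crucial step is then to compute $f(\beta_i)$, where $\beta_i \in \F_{q^2}$ is a root of $P_i^2(x)$. Starting from the identity $x^{q^2} - x = (x^q - x) \prod_{j=1}^t P_j^2(x)$, differentiating both sides (so that $(x^{q^2} - x)' = -1$ in characteristic $p$), and evaluating at $\beta_i$ (where $\prod_j P_j^2(\beta_i) = 0$), only the term involving $\bigl(\prod_j P_j^2\bigr)'(\beta_i)$ survives; after using $P_i^2{}'(\beta_i) = \beta_i - \beta_i^q$, one obtains
\begin{equation*}
\prod_{j \neq i} P_j^2(\beta_i) \;=\; \frac{1}{(\beta_i - \beta_i^q)^2}.
\end{equation*}
The quantity $(\beta_i - \beta_i^q)^2$ is the discriminant of $P_i^2(x) \in \F_q[x]$ and therefore lies in $\F_q^\times$, so $f(\beta_i) \in \F_q^\times$. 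Choosing $\pi_i$ to be the natural $\F_q$-linear isomorphism identifying $a + b\beta_i \in \F_{q^2}$ with the $\mathrm{RS}_q(3,2)$-encoding of $a + bx \in \F_q[x]_{<2}$, the subfield $\F_q \subset \F_{q^2}$ is mapped into the repetition subcode, so $\pi_i(f(\beta_i))$ equals $(f(\beta_i), f(\beta_i), f(\beta_i))$, a vector of weight $3$. This contradicts the assumed weight $2$, and combining with $d \leq 3$ yields $d = 3$ and optimality with respect to the Singleton-like bound.
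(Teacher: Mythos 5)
Your proof is correct, and it follows the same overall strategy as the paper: reduce the minimum-distance question to showing that the polynomial $f = \lambda \prod_{j\neq i}P_j^2(x)$ evaluates to a nonzero element of $\F_q$ at a root $\beta_i$ of the remaining degree-$2$ polynomial, so that its $\mathrm{RS}_q(3,2)$-encoding is a weight-$3$ repetition block. The differences are in how you get there. For the upper bound you invoke the Singleton-like bound (\ref{SingletonLRC}) to get $d\leq 3$ and argue by contradiction, whereas the paper argues directly that a function vanishing at fewer than $t-1$ degree-$2$ places yields a weight $\geq 4$ codeword. For the central identity $\prod_{j\neq i}P_j^2(\beta_i) = (\beta_i^q-\beta_i)^{-2}$, the paper rewrites the product over $\F_{q^2}\smallsetminus(\F_q\cup\{\beta_i,\beta_i^q\})$ and applies Wilson's theorem to $\F_{q^2}^*$, while you differentiate the splitting identity $x^{q^2}-x = (x^q-x)\prod_j P_j^2(x)$ and evaluate at $\beta_i$; both give the same answer, and your derivative trick is arguably cleaner (in particular it sidesteps a minor sign juggling in the paper's version). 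You are also more explicit than the paper about the choice of the $\F_q$-linear isomorphism $\pi_i$ identifying the residue field with $\F_q[x]_{<2}$ before encoding — the paper relies implicitly on this natural choice (made explicit only in its worked $\F_3$ example), but it is worth noting, as for a non-natural $\pi_i$ the subfield $\F_q$ need not land in the repetition subcode and the weight-$3$ conclusion would not follow.
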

\begin{proof}
The length and dimension of the code are straightforward. It remains to compute the minimum distance of $\mathcal{C}$. A function in $f\in\mathcal L((q^2-q-2)P_\infty)$ is a polynomial of degree at most $q^2-q-2$ and thus has at most $q^2-q-2$ zeros. It follows that it can vanish at most at $t-1$ degree 2 places of $\Fq(x)$. If $f$ vanishes at less than $t-1$ places of degree $2$, then the weight of the associated codeword $\mathcal{C}(f)$ is at least 4 since the minimum distance of $\mathrm{RS}_q(3,2)$ is 2. Suppose that $f$ vanishes exactly at $t-1$ degree 2 places. Without loss of generality. Consider that $f$ vanishes at $P_1,\ldots,P_{t-1}$. For all $1\leq i\leq t$, let $p_i$ denote the irreducible polynomial defining the place $P_i$ and let $a_i$ and $a_i^q$ be the roots of $p_i$ in $\mathbb F_{q^2}$. It follows that for all $1\leq i \leq t-1$, the polynomial $p_i$ divides $f$. Thus, $f=c\prod_{i=1}^{t-1}p_i$.
Consider
$$\left(\prod_{i=1}^{t-1}p_i\right)(a_t)=\frac{\prod_{x\in\mathbb F_{q^2}\smallsetminus\{a_t\}}(a_t-x)}{(a_t^q-a_t)\prod_{x\in\mathbb F_{q}}(a_t-x)}.$$
The numerator is the product of all elements of $\mathbb F_{q^2}^*$ and thus is equal to $-1$ by Wilson's Theorem.
Moreover, we have that $\prod_{x\in\mathbb F_{q}}(a_t-x)=a_t^q-a_t$. It follows that
$$\left(\prod_{i=1}^{t-1}p_i\right)(a_t)=\frac{-1}{(a_t-a_t^q)(a_t^q-a_t)}=\frac{1}{(a_t^q-a_t)^2}.$$
Finally, note that $((a_t^q-a_t)^2)^q)=(a_t^q-a_t)^2$, thus is an element of $\Fq$. It results that $\left(\prod_{i=1}^{t-1}p_i\right)(a_t)$ is an element of $\Fq$. Consequently, the polynomial $f=c\left(\prod_{i=1}^{t-1}p_i\right)$ is constant modulo $p_t$, and all of its 3 evaluation using $RS_q(3,2)$ are nonzero, and the weight of $\mathcal C(f)$ is 3. Thus, the minimum distance of $\mathcal{C}$ is 3.
\end{proof}

\subsection{Some longer codes over $\mathbb F_3$}

In the following, we present the results of our experiments over $\mathbb F_3$.
We considered three curves: the projective line over $\mathbb F_3$, the elliptic curve defined by the equation $y^2=x^3+x$ of genus 1, and the Klein quartic defined by $x^4+y^4+1=0$ of genus 3. These last two curves have been chosen because they are maximal over $\mathbb F_9$, and consequently we can expect them to have many places of degree 2 over $\mathbb F_3$. 
More precisely : 
\begin{itemize}
    \item the projective line has 3 places of degree 2,
    \item the elliptic curve has 6 places of degree 2,
    \item and the Klein quartic has 12 places of degree 2.
\end{itemize}

As in the previous example, we evaluate at places of degree 2, and then use the Reed-Solomon code $\mathrm{RS}(3,2)$ to encode the evaluations. Consequently, we obtain codes with locality 2 of length $3s$, where $s$ is the number of degree 2 places used in the construction. As a result, we can build codes with lengths of up to 9 using the projective line, up to 18 using the elliptic curve and up to 36 using the Klein quartic. We constructed these codes using randomly some of the places available for the evaluations. The results are given in Table \ref{data}, where the defect measures the distance from the Singleton bound for LRCs.

%Let $F/\mathbb F_8$ be the function field associated to the curve $\mathcal{C}$ define by the polynomial $y^2+y=x^3+1$, i.e. $F=\frac{F_8(x)[Y]}{(Y^2+Y+x^3+1)}$. 
%This function field contains 12 places of degree 2. (je crois ?) 
%The following table give the parameters allowing to construct two linear codes $\mathcal C_1$ and $\mathcal{C}_2$ over $\mathbb F_8$ with locality 2 of paramaters $[24,15,3]$ and $[24,11,7]$ respectively. 

\begin{table}[]
    \centering
    \begin{tabular}{cc}
    \begin{tabular}{|c|c|cc|cc|cc|}
    \hline
    & & & \hspace{-2em}$\mathbb F_3(x)$ & & \hspace{-2em}$y^2=x^3+x$ & & \hspace{-2em}$x^4+y^4+1$  \\
    \hline
    $n$ & $k$ & ~~$d$~~ & defect & ~~$d$~~ & defect & ~~$d$~~ & defect \\
    \hline
 & 3 & 4 & 2 & 4 & 2 & 4 & 2  \\ 
9 & 4 & 4 & 1 & 4 & 1 & 4 & 1  \\ 
 & 5 & 3 & 0 & 3 & 0 & 3 & 0  \\ 
 \hline
 & 4 & - & - & 5 & 3 & 6 & 2  \\ 
12 & 5 & - & - & 4 & 2 & 4 & 2  \\ 
 & 6 & - & - & 3 & 2 & 4 & 1  \\ 
 \hline
 & 5 & - & - & 6 & 3 & 6 & 3  \\ 
15 & 6 & - & - & 4 & 4 & 5 & 3  \\ 
 & 7 & - & - & 4 & 2 & 4 & 2  \\ 
 & 8 & - & - & 3 & 2 & 4 & 1  \\   
 \hline
 & 6 & - & - & 6 & 5 & 6 & 5  \\ 
 & 7 & - & - & 6 & 3 & 6 & 3  \\ 
18 & 8 & - & - & 4 & 4 & 4 & 4  \\ 
 & 9 & - & - & 4 & 2 & 4 & 2  \\ 
 & 10 & - & - & 2 & 3 & 3 & 2  \\ 
 \hline
 & 7 & - & - & - & - & 8 & 4  \\ 
 & 8 & - & - & - & - & 6 & 5  \\ 
21 & 9 & - & - & - & - & 5 & 4  \\ 
 & 10 & - & - & - & - & 4 & 4  \\ 
 & 11 & - & - & - & - & 4 & 2  \\ 
 & 12 & - & - & - & - & 4 & 1  \\ 
\hline
 & 8 & - & - & - & - & 8 & 6  \\ 
 & 9 & - & - & - & - & 7 & 5  \\ 
 & 10 & - & - & - & - & 6 & 5  \\ 
24 & 11 & - & - & - & - & 6 & 3  \\ 
 & 12 & - & - & - & - & 4 & 4  \\ 
 & 13 & - & - & - & - & 4 & 2  \\ 
 & 14 & - & - & - & - & 3 & 2  \\
 & 15 & - & - & - & - & 3 & 1  \\
 \hline
 & 9 & - & - & - & - & 8 & 7  \\ 
27 & 10 & - & - & - & - & 8 & 6  \\ 
     & 11 & - & - & - & - & 7 & 5  \\ 
\hline
\end{tabular}
&
\begin{tabular}{|c|c|cc|cc|cc|}
    \hline
    & & & \hspace{-2em}$\mathbb F_3(x)$ & & \hspace{-2em}$y^2=x^3+x$ & & \hspace{-2em}$x^4+y^4+1$  \\
    \hline
    $n$ & $k$ & ~~$d$~~ & defect & ~~$d$~~ & defect & ~~$d$~~ & defect \\
    \hline
 & 12 & - & - & - & - & 6 & 5  \\ 
 & 13 & - & - & - & - & 6 & 3  \\ 
27 & 14 & - & - & - & - & 4 & 4  \\ 
 & 15 & - & - & - & - & 4 & 2  \\ 
 & 16 & - & - & - & - & 3 & 2  \\
\hline
 & 10 & - & - & - & - & 10 & 7  \\ 
 & 11 & - & - & - & - & 8 & 7  \\ 
 & 12 & - & - & - & - & 7 & 7  \\ 
 & 13 & - & - & - & - & 7 & 5  \\ 
30 & 14 & - & - & - & - & 6 & 5  \\ 
 & 15 & - & - & - & - & 6 & 3  \\ 
 & 16 & - & - & - & - & 4 & 4  \\ 
 & 17 & - & - & - & - & 4 & 2  \\ 
 & 18 & - & - & - & - & 3 & 2  \\ 
\hline
 & 11 & - & - & - & - & 10 & 8  \\ 
 & 12 & - & - & - & - & 10 & 7  \\ 
 & 13 & - & - & - & - & 8 & 7  \\ 
 & 14 & - & - & - & - & 8 & 6  \\ 
33 & 15 & - & - & - & - & 6 & 6  \\ 
 & 16 & - & - & - & - & 6 & 5  \\ 
 & 17 & - & - & - & - & 5 & 4  \\ 
 & 18 & - & - & - & - & 4 & 4  \\ 
 & 19 & - & - & - & - & 4 & 2  \\ 
\hline
 & 12 & - & - & - & - & 10 & 10  \\ 
 & 13 & - & - & - & - & 10 & 8  \\ 
 & 14 & - & - & - & - & 8 & 9  \\ 
 & 15 & - & - & - & - & 8 & 7  \\ 
36 & 16 & - & - & - & - & 6 & 8  \\ 
 & 17 & - & - & - & - & 6 & 6  \\ 
 & 18 & - & - & - & - & 5 & 6  \\ 
 & 19 & - & - & - & - & 4 & 5  \\ 
 & 20 & - & - & - & - & 4 & 4  \\ 
\hline
\end{tabular}
    \end{tabular}
    \vspace{0.5em}
    \caption{Parameters of linear codes obtained over $\mathbb F_3$ with locality 2.}
    \label{data}
\end{table}

Table \ref{data} shows that we obtain the same parameters than \cite[Table 2]{chwewexu} when $n=21$ and $k=8$. Moreover, we obtain a $[24,8,8]$ code with locality 2 while their example is of parameters $[24,8,6]$.

\section{Some families of LRCs and asymptotic study}\label{asympsection}

\subsection{Explicit families of LRCs}

Our construction is very close to what can be obtained with concatenated codes. In order to compare both constructions, we introduce a family of concatenated codes and another obtained with our approach. 

\subsubsection{Concatenated Construction.}
This construction is a generalization to an outer AG-code of the construction of \cite[Section VI. A.]{cama} using an outer extended Reed-Solomon code. 

\begin{proposition}[Construction 1]\label{const2}
 Let $F/\mathbb F_{q^r}$ be a function field of genus $g$ containing $s$ rational places, denoted by $P_1,\ldots,P_s$. 
Let $\mathcal C_\mathrm{par}$ the $q-$ary single parity check code of length $r+1$ and dimension $r$, that has minimum distance $2$. For $g-1 < k_0 < s-g+1$, let $G$ be a divisor of $F$ of degree $k_0+g-1$ and $\D=P_1+\cdots+P_s$. Then, the concatenated code $\mathcal C_\mathrm{conc}$ defined by the outer code $C(\D,G)$ and the inner code $\mathcal{C}_\mathrm{par}$ is a $[n,k,\geq d]$ linear code over $\Fq$ with locality $r$, such that
$$n=(r+1)s,$$
$$k=rk_0$$
$$d\geq 2\left(s-\frac{k}{r}-g+1\right).$$
It follows that the rate of this code verifies 
$$\frac{k}{n}\geq \frac{r}{r+1}-\frac{r}{2}\delta - \frac{r(g-1)}{n},$$
where $\delta=\frac{d}{n}.$
\end{proposition}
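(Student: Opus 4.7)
The plan is to apply Proposition \ref{concprop} directly once the parameters of the outer AG code and the inner parity check code are identified, then derive the rate bound algebraically from the minimum distance inequality. The only ingredients beyond routine bookkeeping are the standard Riemann-Roch estimates for the outer code.

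First, I would check the parameters of the outer code $C(\mathcal{D},G)$ over $\mathbb{F}_{q^r}$. The hypothesis $g-1 < k_0$ gives $\deg G = k_0 + g - 1 \geq 2g-1$, so Riemann-Roch yields $\dim \mathcal{L}(G) = k_0$. The hypothesis $k_0 < s - g + 1$ gives $\deg G < s = \deg \mathcal{D}$, so the evaluation map is injective and the usual AG bound gives minimum distance at least $s - k_0 - g + 1$. The inner code $\mathcal{C}_\mathrm{par}$ has parameters $[r+1,r,2]_q$ by definition, and it has locality $r$ since any of its coordinates is determined by the parity equation from the other $r$. Applying Proposition \ref{concprop} then yields $n = s(r+1)$, $k = r k_0$, and
$$d \geq 2(s - k_0 - g + 1) = 2\left(s - \frac{k}{r} - g + 1\right),$$
matching the claim. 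The locality $r$ of $\mathcal{C}_\mathrm{conc}$ follows from the observation, noted just before Proposition \ref{concprop}, that a concatenated code inherits the locality of its inner code.

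For the rate bound, I would rearrange the minimum distance inequality to $k_0 \geq s - d/2 - g + 1$, divide $k = r k_0$ by $n = s(r+1)$, and substitute $\delta = d/n$; this yields the claimed inequality after regrouping the three terms. No step presents a substantive obstacle: the proof reduces to combining the standard Riemann-Roch bounds for AG codes with the concatenation lemma and a short algebraic manipulation.
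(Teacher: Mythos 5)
Your proof is correct and follows essentially the same route as the paper's: establish the outer AG code's parameters via Riemann-Roch and the standard AG minimum-distance bound, invoke Proposition~\ref{concprop} for the concatenation, inherit locality from the inner parity-check code, and rearrange the distance inequality into the rate bound. The only difference is that you spell out the Riemann-Roch step giving $\dim\mathcal{L}(G)=k_0$, which the paper leaves implicit.
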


\begin{proof}
The code $C(\D,G)$ is well defined since $\deg(G)<s$ and its parameters are $[s,k_0,\geq s-\deg(G) =s - k_0 -g +1]$. By Proposition \ref{concprop}, it follows that $\mathcal C_\mathrm{conc}$ is a $[(r+1)s,rk_0,2(s-k_0-g+1)]$ $q-$ary linear code. Then, we obtain
$\frac{k}{r}=k_0\geq s- \frac{d}{2}-g+1,$ and thus
$$\frac{k}{n}\geq\frac{rs}{(r+1)s}-\frac{rd}{2n} - \frac{rg-r}{(r+1)s}=\frac{r}{r+1}\left(1-\frac{r+1}{2}\delta - \frac{g-1}{s}\right).$$
\end{proof}

Note that in this construction, as well as in the known constructions of \cite{taba,tabavl,cama}, the dimension is a multiple of locality. 

\subsubsection{New construction.}
Let us introduce a specific family of codes obtained with our new strategy, using Proposition \ref{proppratik}. 

\begin{proposition}[Construction 2]
 Let $F/\Fq$ be a function field of genus $g$ containing $s$ places of degree $r>1$, denoted by $P_1,\ldots,P_s$. 
Let $\mathcal C_\mathrm{par}$ the $q-$ary single parity check code of length $r+1$ and dimension $r$, that has minimum distance $2$. For $g-1 < k < rs-g+1$, let $G$ be a divisor of $F$ of degree $k+g-1$. Then, the code $C(P_1,\ldots,P_s:G:\mathcal C_\mathrm{par},\ldots,\mathcal C_\mathrm{par})$ is a $[n,k,\geq d]$ linear code over $\Fq$ with locality $r$, such that
$$n=(r+1)s,$$
$$d\geq 2\left(s-\left\lfloor\frac{k+g-1}{r}\right \rfloor\right).$$
It follows that the rate of this code verifies 
$$\frac{k}{n}\geq \frac{r}{r+1}-\frac{r}{2}\delta - \frac{g-1}{n},$$
where $\delta=\frac{d}{n}.$
\end{proposition}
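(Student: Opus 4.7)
The plan is to obtain this as a direct application of Proposition~\ref{proppratik}, taking the inner codes $\mathcal C_1=\cdots=\mathcal C_s=\mathcal C_\mathrm{par}$, followed by a short rearrangement to convert the minimum-distance bound into the stated rate bound.

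First I would check that $\mathcal C_\mathrm{par}$ satisfies the hypotheses of Proposition~\ref{proppratik}. It is a $[r+1,r,2]$ code and has locality $r$ because every coordinate of a codeword equals the (negated) sum of the other $r$ coordinates. Next I would verify the constraint on $\deg(G)=k+g-1$: the assumption $g-1<k$ gives $\deg(G)\geq 2g-1$, and the assumption $k<rs-g+1$ gives $\deg(G)<rs$, so Proposition~\ref{proppratik} applies. Reading off its conclusion with $n'=r+1$ and $d'=2$ yields a code of length $s(r+1)=n$, dimension $\deg(G)-g+1=k$, locality $r$, and minimum distance at least
\[
2\Bigl(s-\Bigl\lfloor\tfrac{k+g-1}{r}\Bigr\rfloor\Bigr),
\]
which is exactly what is claimed.

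For the rate inequality, I would start from the minimum distance bound, use $\lfloor x\rfloor\leq x$ to drop the floor, and rearrange
\[
d \geq 2\bigl(s-(k+g-1)/r\bigr)
\]
into $k \geq rs-rd/2-g+1$. Dividing through by $n=(r+1)s$, writing $\delta=d/n$, and using $rs/n=r/(r+1)$ then produces the stated inequality.

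I do not expect any real obstacle: the statement is essentially a repackaging of Proposition~\ref{proppratik} for the particular choice of the single parity check inner code, and the rate inequality is a routine algebraic manipulation of the distance bound. The only small points to be careful about are keeping the direction of the inequality correct when dropping the floor (since we are lower-bounding $d$, one upper-bounds $(k+g-1)/r$ by itself without the floor), and noting explicitly that the locality claim really does follow from the inner-code locality, as observed right after Definition~\ref{AGgen} in the general proposition.
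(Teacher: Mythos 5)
Your proof is correct and follows essentially the same route as the paper: apply Proposition~\ref{proppratik} with the single parity check code as the common inner code (having verified $2g-1\leq\deg(G)<rs$ from the hypotheses on $k$), then drop the floor and rearrange to get the rate bound. The one small thing you make explicit that the paper leaves implicit is the check that $\mathcal C_\mathrm{par}$ indeed has locality $r$, which is a useful clarification since Proposition~\ref{proppratik} requires it.
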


\begin{proof}
By hypothesis on $k$, it follows that $\deg(G)<rs$ and the code is well defined. Moreover, it also ensures that the  expected dimension of the code is obtained, since $\deg(G)>2g-2$. The lower bound $d$ on the minimum distance is directly given by Proposition \ref{proppratik}. The floor $\left\lfloor\frac{k+g-1}{r}\right\rfloor$ can be upper bounded by $\frac{k+g-1}{r}$. It follows that 
$k\geq sr- \frac{rd}{2}-g+1,$ and thus
$$\frac{k}{n}\geq\frac{rs}{(r+1)s}-\frac{rd}{2n} - \frac{g-1}{(r+1)s}=\frac{r}{r+1}\left(1-\frac{r+1}{2}\delta - \frac{g-1}{rs}\right),$$
and the result follows from $n=(r+1)s.$
\end{proof}

\noindent As a result, this construction is of interest for small values of $r$. 

\begin{remark}\label{ifmultiple}
Note that if $g=0$ and $k$ is a multiple of $r$, we get $d\geq 2(s-\frac{k}{r}+1)$.
\end{remark}

%{\color{purple}The construction proposed in this document cannot properly be compared to the great constructions of Barg-Tamo \cite{taba}, because we consider long codes over a small alphabet, where they consider constructions regardless the size of the alphabet. TG a reprendre
%}

\begin{remark}
    Using Constructions 1 and 2 with $g=0$, i.e.~the outer code is a RS code in Construction 1 and we use the rational function field in Construction 2, and if $r$ divides $k$, then the bound on the minimum distance is the same for both construction (Remark \ref{ifmultiple}).
\end{remark}

In the case of a larger fixed genus $g$, and for given dimension $k=k_0r$ and length $n$, suppose that there exists a function field of genus $g$ containing $s$ $\mathbb F_{q^r}-$rational places (for Construction 1), and another containing $s=n/(r+1)$ places of degree $r$ (for Construction 2). Then the codes provided by Construction 2 might have a better minimum distance.
However, this comparison is not very fair. Indeed, the number of places of degree $r$ in a function field defined over $\Fq$ belongs to $\mathcal{O}(\frac{q^r}{r})$, while the number of $\mathbb F_{q^r}-$rational places to $\mathcal{O}(q^r)$. It follows that for a given base field, length and dimension, Construction 1 can be used with function fields of smaller genera than Construction 2. 

In the next sections, we give an asymptotic study of the rates of the codes obtained by both constructions. This gives an idea of the impact of the growth of the genera of the function fields used when the parameters are increasing. 
What remains of this paper is on the use of infinite sequences of function fields reaching this bound in order to obtain families of codes.
The nature of the geometric objects used suggests to consider the asymptotic behaviour of Construction 1 before those of Construction 2. 

\subsection{Asymptotic study of the Concatenated construction}

Before introducing an explicit family, let us give some sufficient conditions for their existence.

\begin{proposition}
Suppose there exists an infinite sequence of function field $\mathcal{F}=(F_1,\ldots,F_\ell,\ldots)$ and $b\in\mathbb R$ such that for all $\ell\geq\ell_0\in\mathbb N$,
$$\frac{B_1(F_\ell/\mathbb F_{q^r})}{g(F_\ell)}\geq b\geq2.$$
Then, Construction 1 provides an infinite family of $q-$ary linear codes with locality $r$ verifying
$$\frac{k}{n}\geq \frac{r}{r+1}\left(1-\frac{r+1}{2}\delta - \frac{1}{b}\right).$$
\end{proposition}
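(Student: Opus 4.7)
The plan is to apply Construction 1 term-by-term to the sequence $\mathcal{F}$ and combine the rate bound already established in Proposition~\ref{const2} with the hypothesis $B_1(F_\ell/\mathbb F_{q^r})/g(F_\ell)\geq b\geq 2$. The whole statement reduces to a substitution once two bookkeeping checks are done, so I would organize the proof around those two checks and the substitution step.

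First I would fix $\ell \geq \ell_0$ and set $s = s_\ell := B_1(F_\ell/\mathbb F_{q^r})$, $g = g(F_\ell)$. Construction~1 requires an admissible dimension parameter $k_0$ in the open interval $(g-1,\, s-g+1)$; I would observe that this interval is non-empty exactly when $s \geq 2g - 1$, and since the hypothesis gives $s \geq b\,g \geq 2g$, this is satisfied for every $\ell \geq \ell_0$. This is the only place the assumption $b \geq 2$ is used, and it is what lets the family be defined at every index.

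Next, for any admissible $k_0$, Construction~1 produces a $q$-ary linear code of length $n=(r+1)s$, dimension $k=rk_0$ and relative distance $\delta = d/n$ satisfying
$$\frac{k}{n}\;\geq\;\frac{r}{r+1}\left(1-\frac{r+1}{2}\delta -\frac{g-1}{s}\right).$$
I would then estimate $\frac{g-1}{s} \leq \frac{g}{s} \leq \frac{1}{b}$, using the hypothesis directly. Plugging this into the displayed inequality yields the advertised bound
$$\frac{k}{n}\;\geq\;\frac{r}{r+1}\left(1-\frac{r+1}{2}\delta-\frac{1}{b}\right),$$
valid simultaneously for every admissible choice of $k_0$ at stage $\ell$.

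Finally, to verify that the resulting codes form an \emph{infinite} family, I would appeal to the defining condition $g(F_{\ell+1}) > g(F_\ell)$: combined with $s_\ell \geq b\, g(F_\ell)$, this forces the lengths $n_\ell = (r+1)s_\ell$ to diverge, so the codes are pairwise distinct (and, in particular, of unbounded length). There is no real obstacle here; the content is a clean substitution, and the only subtle point is the sharp role played by $b \geq 2$ in guaranteeing a non-empty interval of admissible dimensions. If anything, the only care needed is to observe that the hypothesis is required only from $\ell_0$ onward, which does not affect the conclusion of \emph{infiniteness} of the family.
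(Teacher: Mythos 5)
Your proposal is correct and follows essentially the same route as the paper: apply Construction~1 at each level $\ell$, observe that the hypothesis $B_1(F_\ell/\mathbb F_{q^r}) \geq b\,g(F_\ell) \geq 2g(F_\ell)$ makes the admissible interval $(g-1,\,s-g+1)$ non-empty so the construction can be instantiated, and then substitute $\frac{g-1}{s}\leq\frac{1}{b}$ into the rate bound of Proposition~\ref{const2}. The only cosmetic difference is that you spell out the final substitution and the infiniteness of the family (via strictly increasing genera forcing the lengths to diverge), details the paper leaves implicit.
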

\begin{proof}
    Such an infinite family of codes exists if for all $\ell\geq \ell_0$, one can construct at least one code with Construction 1 while defining the outer code over $F_\ell/\mathbb F_{q^r}$.
    It thus requires there exists an integer $k_0$ such that $g-1<k_0<s-g(F_\ell)+1$ while $F_\ell/\mathbb F_{q^r}$ contains at least $s$ rational places. This is possible if $B_1(F_\ell/\mathbb F_{q^r})-g>g-1$, where $g:=g(F_\ell)$. By hypothesis, we have $B_1(F_\ell/\mathbb F_{q^r})\geq bg$. Consequently, it is sufficient that $(b-2)g>-1$, and that is always the case if $b\geq2$. The ratio of the codes directy comes from Proposition \ref{const2}.
\end{proof}

The existence of infinite family of codes defined by this construction is ensured by sequences of function fields reaching the Drinfeld-Vladut bound, such as the recursive tower of function fields defined by Garcia and Stichtenoth \cite{gast}. This tower $\mathcal{T}=(T_1/\mathbb F_{q^2},\ldots,T_\ell\mathbb F_{q^2},\ldots)$ is defined as follows. 

\begin{definition}\label{GS}
Set $T_1=\mathbb F_{q^2}(x_1)$ the rational function field over $\mathbb F_{q^2}$, and for $i\geq 1$ we define 
$$T_{i+1}=T_i(z_{i+1}),$$
where $z_{i+1}$ satisfies the equation 
$$z_{i+1}^q+z_{i+1}=x_i^{q+1}\text{ with }x_i=\frac{z_i}{x_{i-1}}\text{ for }i\geq2.$$    
\end{definition}

By \cite[Theorem 2.10 and Proposition 3.1]{gast}, we have the following
\begin{proposition}
    Let $\ell\geq 3$, then the genus $g(T_\ell)$ and the number of rational place $B_1(T_\ell)$ of $T_\ell$ verify
    $$g(T_\ell)\leq q^\ell + q^{\ell-1}\text{ and }B_1(T_\ell/\mathbb F_{q^2})\geq (q^2-1)q^{\ell-1}+2q.$$
\end{proposition}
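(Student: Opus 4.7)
This statement is quoted from \cite{gast}, where the two bounds appear as Theorem 2.10 and Proposition 3.1; the plan is to exploit the tower structure recursively. Each extension $T_{i+1}/T_i$ is an Artin-Schreier extension of degree $q$ defined by the equation $z_{i+1}^q+z_{i+1}=x_i^{q+1}$. The genus estimate follows from applying the Hurwitz genus formula at each level of the tower, while the count of rational places follows from identifying rational places of $T_1$ that split completely through every layer.

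For the genus bound, I would first locate the ramification of $T_{i+1}/T_i$. Since it is an Artin-Schreier extension, the different is supported only at the places where $x_i^{q+1}$ has a pole of order coprime to $q$ (after the standard Artin-Schreier reduction of the defining equation). At such a place $P$ with pole order $m_P$, the different exponent equals $(q-1)(m_P+1)$. Using the recursive substitution $x_i=z_i/x_{i-1}$, the pole divisors of $x_i$ can be expressed in terms of places of $T_1$ propagating up the tower. Plugging these into
$$2g(T_{i+1})-2 = q\bigl(2g(T_i)-2\bigr) + \deg\mathrm{Diff}(T_{i+1}/T_i)$$
yields a recursion that, when iterated and bounded, gives $g(T_\ell)\leq q^\ell + q^{\ell-1}$ for $\ell\geq 3$.

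For the rational place count, the key observation is that many rational places of $T_1=\mathbb F_{q^2}(x_1)$ split completely at each step. Specifically, the place $x_1=\alpha$ with $\alpha\in\mathbb F_{q^2}^*$ splits completely in $T_2$ exactly when the equation $z_2^q+z_2=\alpha^{q+1}$ has $q$ roots in $\mathbb F_{q^2}$; since $\alpha^{q+1}\in\mathbb F_q$ and every element of $\mathbb F_q$ lies in the image of the Artin-Schreier map $z\mapsto z^q+z$ over $\mathbb F_{q^2}$, this holds for all such $\alpha$. An inductive argument then shows that this complete splitting persists through every subsequent layer of the tower, contributing $(q^2-1)q^{\ell-1}$ rational places of $T_\ell$. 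A separate analysis of the place at infinity and of $x_1=0$ yields the residual $2q$ rational places.

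The main obstacle is the ramification bookkeeping. Because of the recursive substitution $x_i=z_i/x_{i-1}$, the pole and zero divisors of the uniformising parameters do not live on a single layer and must be tracked carefully through all the intermediate Artin-Schreier extensions, taking into account how ramified places from earlier steps interact with the defining equations at later steps. Once this structural analysis is in place, both the Hurwitz computation leading to the genus bound and the inductive splitting argument for the rational places become essentially mechanical.
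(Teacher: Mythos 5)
Your proposal is correct and takes essentially the same approach as the paper, which does not re-prove this proposition but simply cites Garcia--Stichtenoth \cite{gast} (Theorem~2.10 and Proposition~3.1). Your sketch faithfully reconstructs the Hurwitz-formula genus estimate and the complete-splitting argument for the rational place count from that reference.
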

This gives us directly the useful following result. 
\begin{corollary}\label{lemmeC1}
    Let $\ell\geq 3$, then
    $$\frac{B_1(T_\ell/\mathbb F_{q^2})}{g(T_\ell)}\geq q-1.$$
\end{corollary}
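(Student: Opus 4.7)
The plan is to deduce this corollary directly from the preceding proposition, which provides an upper bound on $g(T_\ell)$ and a lower bound on $B_1(T_\ell/\mathbb F_{q^2})$. Since the ratio in the corollary is monotone in the expected direction (increasing in the numerator, decreasing in the denominator), I would substitute these bounds into the ratio and then simplify algebraically.

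More precisely, first I would write
$$\frac{B_1(T_\ell/\mathbb F_{q^2})}{g(T_\ell)}\geq \frac{(q^2-1)q^{\ell-1}+2q}{q^\ell+q^{\ell-1}}.$$
Then I would factor the denominator as $q^{\ell-1}(q+1)$ and the dominant part of the numerator as $(q-1)(q+1)q^{\ell-1}$, which gives
$$\frac{(q^2-1)q^{\ell-1}+2q}{q^\ell+q^{\ell-1}} = (q-1) + \frac{2q}{q^{\ell-1}(q+1)}.$$
Since the remainder term is strictly positive for all $\ell \geq 3$ and $q \geq 2$, the claimed inequality follows.

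The only mild point to verify is that the hypothesis $\ell \geq 3$ is inherited from the preceding proposition (so that the bounds on $g(T_\ell)$ and $B_1(T_\ell/\mathbb F_{q^2})$ are valid); there is no genuine obstacle here since the corollary is essentially a one-line arithmetic consequence of the stated bounds. I do not anticipate any difficulty beyond the algebraic rearrangement.
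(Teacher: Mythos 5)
Your proof is correct and is exactly the intended argument: the paper states the corollary as a direct consequence of the preceding proposition without writing out the algebra, and your computation (substituting the two bounds and factoring out $q^{\ell-1}(q+1)$) is precisely that one-line deduction.
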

It follows that the tower $\mathcal T/\mathbb F_q^2$ is asymptotically optimal, i.e. reaches the Drinfeld-Vladut Bound. This is an ideal frame for applying Construction 1.

\begin{proposition}\label{coroC1}
    Let $q$ be a prime power and $r$ an even integer, except $q=r=2$. Then, Construction 1 provides an infinite family of linear code with locality $r$ verifying
$$\frac{k}{n}\geq \frac{r}{r+1}\left(1-\frac{r+1}{2}\delta - \frac{1}{q^{\frac{r}{2}}-1}\right).$$
\end{proposition}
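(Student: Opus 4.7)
The plan is to reduce the statement to the previous proposition by choosing the right base field for the Garcia--Stichtenoth tower. Since Construction 1 requires an outer code over $\mathbb{F}_{q^r}$ and the tower $\mathcal{T}$ in Definition~\ref{GS} is defined over the square of a prime power, the natural idea is to write $\mathbb{F}_{q^r} = \mathbb{F}_{Q^2}$ with $Q = q^{r/2}$, which is legitimate precisely because $r$ is even.

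First I would instantiate the Garcia--Stichtenoth tower with $Q = q^{r/2}$ in place of $q$ in Definition~\ref{GS}, obtaining a tower $\mathcal{T} = (T_1/\mathbb{F}_{Q^2}, T_2/\mathbb{F}_{Q^2},\ldots)$ of function fields over $\mathbb{F}_{Q^2} = \mathbb{F}_{q^r}$. Then I would invoke Corollary~\ref{lemmeC1} (again with $Q$ replacing $q$) to conclude that for every $\ell \geq 3$
$$
\frac{B_1(T_\ell/\mathbb{F}_{q^r})}{g(T_\ell)} \geq Q - 1 = q^{r/2}-1.
$$
Next, I would verify the hypothesis $b \geq 2$ of the previous proposition with the choice $b = q^{r/2}-1$. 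This amounts to checking that $q^{r/2} \geq 3$ for all admissible pairs $(q,r)$: for $r \geq 4$ one has $q^{r/2} \geq 2^2 = 4$, and for $r = 2$ one needs $q \geq 3$, which is exactly the pair excluded by the hypothesis $(q,r) \neq (2,2)$. This case analysis is the only genuine point of care in the argument.

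Having established that the tower $\mathcal{T}/\mathbb{F}_{q^r}$ satisfies the hypotheses of the previous proposition with $b = q^{r/2}-1$, the conclusion is immediate: Construction~1 applied to the function fields $T_\ell/\mathbb{F}_{q^r}$ yields an infinite family of $q$-ary locally recoverable codes with locality $r$ whose rate satisfies
$$
\frac{k}{n} \geq \frac{r}{r+1}\left(1 - \frac{r+1}{2}\delta - \frac{1}{b}\right) = \frac{r}{r+1}\left(1 - \frac{r+1}{2}\delta - \frac{1}{q^{r/2}-1}\right),
$$
which is the desired bound. The main (mild) obstacle is therefore not the estimate itself but the bookkeeping involved in correctly identifying the base field of the tower with $\mathbb{F}_{q^r}$ and confirming that the exceptional case $q = r = 2$ is exactly the pair for which the threshold $b \geq 2$ fails.
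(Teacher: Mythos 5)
Your proof is correct and follows essentially the same approach as the paper's: both instantiate the Garcia--Stichtenoth tower over $\mathbb{F}_{q^r}$ by writing it as $\mathbb{F}_{Q^2}$ with $Q=q^{r/2}$ (legitimate since $r$ is even and $q$ is a prime power), invoke Corollary~\ref{lemmeC1} to obtain the ratio $q^{r/2}-1$, and verify the threshold $b\geq 2$ fails only for $q=r=2$. You merely make explicit the substitution and case analysis that the paper's terse proof leaves implicit.
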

\begin{proof}
    If $q^r$ is a square, then the recursive tower $\mathcal{T}=(T_1,\ldots)$ over $\mathbb F_{q^r}$ of Garcia-Stichtenoth reaches the Drinfeld-Vladut bound. In particular, for all $\ell\geq3$, we have $\frac{B_1(T_\ell)}{g(T_\ell)}\geq q^{\frac{r}{2}}-1$ (Corollary \ref{lemmeC1}), and this value is at least 2 except when $q=r=2$. 
\end{proof}

\begin{example}\label{exC1}
    For $q=4$ and $r=2$, we obtain $\frac{k}{n}\geq \frac{1}{3}-\delta$.
\end{example}

\noindent Note that for $r=2$, this gives the same bound as in \cite[Theorem 3.7]{xingalllast}.

\subsection{Asymptotic study of the new construction}

First, we give sufficient conditions for applying the construction. 

\begin{proposition}\label{asympC1}
Suppose there exists an infinite sequence of function field $\mathcal{F}=(F_1,\ldots,F_\ell,\ldots)$ and $b_r\in\mathbb R$ such that for all $\ell\geq\ell_0\in\mathbb N$,
$$\frac{B_r(F_\ell/\mathbb F_{q})}{g(F_\ell)}\geq rb_r\geq2.$$
Then, Construction 2 provides an infinite family of $q-$ary linear codes with locality $r$ verifying
$$\frac{k}{n}\geq \frac{r}{r+1}\left(1-\frac{r+1}{2}\delta - \frac{1}{rb_r}\right).$$
\end{proposition}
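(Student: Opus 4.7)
The plan is to mirror the proof given for the analogous statement about Construction 1, adjusting only for the fact that the residual term appearing in Construction 2's rate formula has $rs$ in the denominator (where $s = B_r(F_\ell)$) rather than $s$ alone. Two things must be checked: first, that Construction 2 can actually be applied on every $F_\ell$ for $\ell \geq \ell_0$, and second, that the general rate inequality already produced by Construction 2 specializes to the stated asymptotic bound once the hypothesis on $B_r/g$ is used.

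For the feasibility step, write $g = g(F_\ell)$ and $s = B_r(F_\ell/\Fq)$. Construction 2 requires an integer $k$ with $g-1 < k < rs - g + 1$, which is possible as soon as $rs \geq 2g$. From $s \geq r b_r g$ and $rb_r \geq 2$ I obtain $rs \geq r^2 b_r g \geq 2rg \geq 2g$, so the construction applies at every level $\ell \geq \ell_0$. Since $g(F_\ell) \to \infty$ along the sequence, the lengths $n_\ell = (r+1)s_\ell$ grow without bound, producing a genuinely infinite family.

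For the rate step, Construction 2 already delivers
$$\frac{k}{n} \geq \frac{r}{r+1}\left(1 - \frac{r+1}{2}\delta - \frac{g-1}{rs}\right),$$
so it only remains to bound the residual term. Using $s \geq rb_r g$ I get $\frac{g-1}{rs} \leq \frac{g}{rs} = \frac{1}{r}\cdot\frac{g}{s} \leq \frac{1}{r^2 b_r} \leq \frac{1}{rb_r}$, and substituting into the display above gives the announced inequality. The only step with any real content is the feasibility check; the form $rb_r \geq 2$ of the hypothesis is tailored precisely so that the threshold $rs \geq 2g$ is automatic, playing here the role the condition $b \geq 2$ did for Construction 1. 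Morally, a degree $r$ place contributes $r$ units of dimension through Riemann--Roch, and the factor $r$ inside $rb_r$ compensates for this when comparing with the Construction 1 setting.
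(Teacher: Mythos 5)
Your proof is correct and follows essentially the same route as the paper's: check that the hypothesis guarantees $rs \geq 2g$ so that an admissible degree for $G$ exists at every level $\ell \geq \ell_0$, then plug $s \geq r b_r g$ into the rate inequality from Construction~2. One small observation worth making: reading the hypothesis literally as $B_r(F_\ell)/g(F_\ell)\geq r b_r$, your computation in fact yields $\frac{g-1}{rs}\leq\frac{1}{r^2 b_r}$, which is a strictly stronger bound than the stated $\frac{1}{r b_r}$ for $r>1$; the paper's own proof silently uses only the weaker consequence $B_r\geq b_r g$, for which $\frac{1}{r b_r}$ is the tight outcome, so the statement and its proof in the paper are not perfectly aligned. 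Your version is internally consistent with the statement as written and, if anything, establishes a bit more.
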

\begin{proof}
    Similarly to the case of Construction 1, such an infinite family of codes exists if for all $\ell\geq \ell_0$, one can construct at least one code with Construction 1 using $F_\ell/\mathbb F_{q}$. Let $g$ denotes $g(F_\ell)$.
    It thus requires there exists an integer $k$ such that $g-1<k<s-g+1$ while $F_\ell/\mathbb F_{q}$ contains at least $s$ places of degree $r$. This is possible if $rB_r(F_\ell/\mathbb F_{q})-g>g-1$. By hypothesis, we have $B_r(F_\ell/\mathbb F_{q})\geq b_rg$. Consequently, it is sufficient that $(rb_r-2)g>-1$, and that is always the case if $rb_r\geq2$. The ratio of the codes directly comes from Proposition \ref{const2}.
\end{proof}

In \cite{baro4}, Ballet and Rolland studied the descent of the tower $\mathcal{T}/\mathbb F_{q^2}$ (Definition \ref{GS}) to the field of constant $\Fq$. More precisely,

\begin{definition}\label{towerfq}
    Let $\mathcal{U}/\mathbb F_q=(U_1/\Fq,\ldots,U_\ell/\Fq,\ldots)$ be the tower of function fields such that $T_i=\mathbb F_{q^2}U_i$, i.e. $T_i/\mathbb F_{q^2}$ is the constant field extension of $U_i/\Fq$.
\end{definition}
The authors also proved that these towers reach the Drinfeld-Vladut bound at order 2 \cite[Proposition 3.3]{baro4}. This allows us to prove the existence of infinite families of linear code with locality 2 thanks to Construction 2.

\begin{corollary}
    Let $q>3$ be a prime power. Then, Construction 1 provides an infinite family of linear code with locality $2$ verifying
$$\frac{k}{n}\geq \frac{2}{3}\left(1 - \frac{q}{q^2-q-2}\right)-\delta.$$
\end{corollary}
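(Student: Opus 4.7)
The plan is to specialize Proposition \ref{asympC1} to $r=2$ using the Ballet--Rolland descent tower $\mathcal{U}/\Fq$ of Definition \ref{towerfq}. Concretely, for each sufficiently large $\ell$, I would apply Construction 2 to $U_\ell/\Fq$, using all $B_2(U_\ell)$ of its degree $2$ places as evaluation points and the single parity check code of length $3$ as the inner code; this automatically produces a $\Fq$-linear code with locality $2$ and length $n = 3 B_2(U_\ell)$.

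The heart of the argument is the explicit lower bound $\frac{B_2(U_\ell)}{g(U_\ell)} \geq \frac{q^2-q-2}{2q}$ for $\ell$ large enough, a value strictly smaller than the Drinfeld--Vladut bound $(q-1)/2$ at order $2$ (so compatible with it) but large enough to drive the corollary. To derive it, I would exploit the constant-field decomposition $B_1(T_\ell) = B_1(U_\ell) + 2 B_2^{\mathrm{split}}(U_\ell) \leq B_1(U_\ell) + 2 B_2(U_\ell)$ coming from $T_\ell = \mathbb F_{q^2} U_\ell$ being the constant-field extension, together with the Garcia--Stichtenoth estimates $g(U_\ell) = g(T_\ell) \leq q^\ell + q^{\ell-1}$ and $B_1(T_\ell) \geq (q^2-1)q^{\ell-1} + 2q$, and the bound on $B_1(U_\ell)$ supplied by \cite[Proposition 3.3]{baro4}. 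The restriction $q > 3$ appears precisely when requiring $b_2 = (q^2-q-2)/(2q)$ to satisfy $2b_2 \geq 2$, i.e.\ $q^2 - 3q - 2 \geq 0$, so that the admissibility condition of Proposition \ref{asympC1} is met.

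Once this lower bound is in place, substituting $b_2 = (q^2-q-2)/(2q)$ into the rate inequality of Proposition \ref{asympC1} yields $\frac{k}{n} \geq \frac{2}{3}\left(1 - \tfrac{3}{2}\delta - \frac{1}{2b_2}\right) = \frac{2}{3}\left(1 - \frac{q}{q^2-q-2}\right) - \delta$, which is the stated conclusion; the family is infinite because $\ell$ may be taken arbitrarily large.

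The principal obstacle is the finite-level (rather than merely asymptotic) lower bound on $B_2(U_\ell)/g(U_\ell)$: since $\mathcal{U}$ reaches Drinfeld--Vladut at order $2$ only in the limit, one must quantitatively bound $B_1(U_\ell)$ away from $B_1(T_\ell)$, ensuring that a large enough fraction of the $\mathbb F_{q^2}$-rational places of $T_\ell$ arises from the splitting of degree $2$ places of $U_\ell$. This quantitative descent is the nontrivial input supplied by \cite{baro4}, and is essentially the only step where the specific constant $(q^2-q-2)/(2q)$ is produced.
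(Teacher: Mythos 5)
Your proposal is correct and follows essentially the same path as the paper: you use the Ballet--Rolland descent tower $\mathcal{U}/\mathbb F_q$ in Construction 2 with the length-$3$ parity-check inner code, derive the explicit bound $B_2(U_\ell)/g(U_\ell)\geq\frac{q^2-q-2}{2q}=\frac{q-1}{2}-\frac{1}{q}$ for $\ell$ large (the paper makes this precise as $\ell\geq 3$), obtain $q>3$ from the admissibility condition $q^2-3q-2\geq 0$, and substitute into Proposition~\ref{asympC1}. The only cosmetic difference is that you unfold the constant-field-extension counting $B_1(T_\ell)=B_1(U_\ell)+2B_2(U_\ell)$ (note this is actually an equality, not merely an inequality, since every degree-$2$ place of $U_\ell$ splits completely in the quadratic constant extension $T_\ell$) together with the bound $B_1(U_\ell)\leq 2q^2$, whereas the paper cites these directly from \cite{baro4}.
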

\begin{proof}
    We consider the tower $\mathcal{U}/\Fq$ (Definition \ref{towerfq}). By \cite[Proposition 3.3]{baro4}, it is proven that for $\ell\in\mathbb N $, 
    $$B_1(U_\ell/\Fq)+2B_2(U_\ell/\Fq)\geq q^2-1\geq (q^2-1)q^{\ell-1}\text{ and }g(U_\ell)\leq q^\ell+q^{\ell-1}.$$
     Moreover, it is also proven \cite[Proof of Lemma 3.1]{baro4} that $B_1(U_\ell/\Fq)\leq2q^2.$ It follows that $2B_2(U_\ell/\Fq)\geq \frac{1}{2}\left((q^2-1)q^{\ell-1}-2q^2\right)$. Then we obtain, 
    $$\frac{B_2(U_\ell/\Fq)}{g(U\ell)}\geq \frac{q^{\ell+1}-q^{\ell-1}-2q^2}{2(q^\ell+q^{\ell-1})}=\frac{q-1}{2}-\frac{q^2}{q^\ell+q^{\ell-1}}.$$
    In particular, for $\ell \geq 3$, $$\frac{B_2(U_\ell/\Fq)}{g(U\ell)}\geq\frac{q-1}{2}-\frac{1}{q}.$$
    According to the Proposition \ref{asympC1}, it is required that $2(\frac{q-1}{2}-\frac{1}{q})\geq 2$. That is the case if $q^2-3q-2\geq 0$ and thus if $q>3$.
\end{proof}

\begin{remark}
    While $\ell$ is increasing, the rate of the codes defined using $U_\ell$ tends to verify $\frac{k}{n}\geq \frac{2}{3}(1-\frac{3}{2}\delta-\frac{1}{q-1})$. This is exactly the bound obtained for Construction 1 in Proposition \ref{coroC1}, specialized to locality 2. More generally, according to the Drinfeld-Vladut Bound, the best rate that can be obtained is
$$\frac{n}{k}\geq\frac{r}{r+1}\left(1-\frac{r+1}{2}\delta-\frac{1}{q^{\frac{r}{2}}-1}\right).$$
\end{remark}

\begin{remark}
Construction 2 requires asymptotically a large number of places of degree $r$. Such objects can be obtained by the descent to $\Fq$ of function fields defined of $\mathbb F_{q^r}$ reaching the Drinfeld-Vladut bound (\cite{cacrxiya}, see \cite{survey}). The sequences studied in \cite{baro4} for $r=2$ or $q=2$ and $r=4$ are convenient for our study.
\end{remark}

\begin{remark}
    One can construct directly codes with locality 2 of odd dimension with the new Construction 2, while it is not possible with Construction 1. In the literature, it is classical to obtain LRCs whose dimension is a multiple of the locality, then some techniques can be used to obtain different dimension, as in \cite{xingalllast}. 
\end{remark}

\begin{remark}
    A reasonable question is whether one can build codes with locality 2 of any dimension $k\in\mathbb N$ thanks to Construction 2. 
    It is not possible with the tower used previously. But it might be possible using the densified version of the tower $\mathcal{U}$ introduced by Ballet and Rolland in \cite{baro4}.
\end{remark}

\begin{remark}
    Although it was quite natural to consider only places of a fixed degree $r$, one can extend our new construction to the use of places of smaller degree, provided that we combine them to obtain spaces of dimension $r$. Moreover, one can also consider generalized evaluation maps, and for instance use the local expansion at order $r$ at rational places. 
\end{remark}

\begin{remark}
    With additional constraints, one can define an evaluation and interpolation multiplication algorithm following the method of Chudnovsky and Chudnovsky \cite{chch,survey}, where the evaluation map defines a generalized AG-code. Then, the minimum distance of the code would be equal to its dimension by \cite{shtsvl}. It could be an interesting work to see whether this argument can give a better bound on the minimum distance for some codes. 
\end{remark}

\begin{remark}
 The construction introduced in this document might be generalized in order to obtain codes with hierarchical locality \cite{hierar}.
\end{remark}

%\begin{remark}
%    The examples provided here have availability 1, in order to have the smallest possible minimum distance and to be the closest possible to the Singleton-like bound. One can increase the availability by using longer codes $C_i$.
%\end{remark}

\begin{comment}
\section{LRC from Chudnovsky-type multiplication algorithm ???}
An algorithm $\mathcal{U}$ for the multiplication in $\Fqn$ over $\Fq$
gives a $[\mu(\mathcal{U}),n,n]_q$ linear code (\cite{shtsvl}).

Suppose that $r$ divides $2n+g-1$.
Let $\mathcal{U}=\mathcal{U}_{q,n}^{F,\mathcal{P}}(\D,Q)$ be a Chudnovsky type algorithm such that :

\begin{itemize}
    \item $g=g(F)$
    \item $\deg\D = n+g-1$
    \item $\mathcal{P}$ is a set of $(2n+g-1)/r$ places of degree $r$ of $F$
\end{itemize}

Then $\mu(\mathcal{U})=\frac{2n+g-1}{r}\mu_q(r)$.

\bigskip

It gives a $[\mu(\mathcal{U}),n,n]_q$ linear code with locality $r$.

By the way, there are $\binom{\mu(r)-1}{r}$ recovery sets. 
    
\end{comment}

\section{Aknowledgments}

The author is deeply grateful to the French ANR BARRACUDA (ANR-21-CE39-0009-BARRACUDA) for its support, and to several of its members for the many inspiring and helpful discussions.

\bibliographystyle{plain}
\bibliography{biblio}

\end{document}